\DeclareMathOperator{\lcm}{lcm}
\newcommand{\unroll}[1]{\mathcal{U}(#1)}
\newcommand{\tree}[1]{\mathbf{\lowercase{#1}}}
\newcommand{\forest}[1]{\mathbf{\uppercase{#1}}}
\newcommand{\depth}[2][]{\mathrm{depth}_{#1}(#2)}
\newcommand{\cycle}{\ell}
\newcommand{\cut}[2]{\mathcal{C}(#1,#2)}
\newcommand{\ie}{\emph{i.e.}\@\xspace}
\newcommand{\oct}{\ge_{c}}
\newcommand{\ioct}{\le_{c}}
\newcommand{\otc}{\ge_{t}}
\newcommand{\setSize}[2]{(#1)(#2)}
\newcommand{\setDive}[2]{(#1)\{#2\}}
\title{Injectivity of polynomials over finite\\discrete dynamical systems}
\author{
	Antonio E. Porreca
	\and
	Marius Rolland
}
\institute{
  Aix-Marseille Université, CNRS, LIS, Marseille, France\\
  \email{marius.rolland@lis-lab.fr}
}
\authorrunning{A.E. Porreca \and M. Rolland}
\date{\today}
\begin{document}
	\maketitle
	\begin{abstract}
        The analysis of observable phenomena (for instance, in biology or physics) allows the detection of dynamical behaviors and, conversely, starting from a desired behavior allows the design of objects exhibiting that behavior in engineering. The decomposition of dynamics into simpler subsystems allows us to simplify this analysis (or design). Here we focus on an algebraic approach to decomposition, based on alternative and synchronous execution as the sum and product operations; this gives rise to polynomial equations (with a constant side). In this article we focus on univariate, injective polynomials, giving a characterization in terms of the form of their coefficients and a polynomial-time algorithm for solving the associated equations.
	\end{abstract}
	
	\section{Introduction}
\label{sec:introduction}

Finite discrete-time dynamical systems (FDDS) are pairs $(X,f)$ where $X$ is a finite set of states and $f: X \to X$ is a transition function (when~$f$ is implied, we denote $(X,f)$ simply as $X$).
We find these systems in the analysis of concrete models such as Boolean networks~\cite{gershenson2004random_bn,automata_book}
and we can apply them to biology~\cite{thomas1990biological_feedback,thomas1973genetic_control_circuits,bernot2013computational_biology} to represent, for example, genetic regulatory networks or epidemic models. 
They also appear in chemistry~\cite{reaction_systems}, or information theory~\cite{gadouleau2011graph_entropy}.

In the following, we focus on deterministic FDDS, and often identify them with their transition digraphs, which have uniform out-degree one.
These graphs are also known as functional digraphs. 
Their general shape is a collection of cycles, where each node of a cycle is the root of a finite in-tree (a directed tree with the edges directed toward the root).
The nodes of the cycles are periodic states, while the others are transient states.

The set $(\mathbb{D}, +, \times)$ of FDDS up to isomorphism, with the alternative execution of two systems as \emph{addition} and the synchronous parallel 
execution of two FDDS as \emph{multiplication} is a commutative semiring~\cite{article_fondateur}. 
As all semirings, we can define a semiring of polynomials over~$\mathbb{D}$ and thus polynomial equations.

Although it has already been proven that general polynomial equations over~$\mathbb{D}$ (with variables on both sides of the equation) are undecidable, it is easily proved that, if one member of the equation is constant, then the problem is decidable (there is just a finite number of possible solutions)~\cite{article_fondateur}.
This variant of the problem is actually in $\NP$ because, since sums and products can be computed in polynomial time, we can just guess the values of the variables (their size is bounded by the constant side of the equation), evaluate the polynomial, and compare it with the constant side (for out-degree one graphs, isomorphism can be checked in linear time~\cite{testIsoLinear}).

However, more restricted equations are not yet classified in terms of complexity.
For example, we do not know if monomial equations of the form $AX = B$ are $\NP$-hard, in $\P$, or possible candidates for an intermediate difficulty. 
However, it has been shown that when~$A$ and~$B$ are connected and with a cycle of length one (i.e., they are trees with a loop, or fixed point, on the root), the equation $AX = B$ can be solved in polynomial time~\cite{article_arbre}.
It has also been proved that, if $X$ is connected, we can solve $AX^k = B$ in polynomial time with respect to the size of the inputs ($A$ and $B$) for all positive integer~$k$~\cite{kroot}.

These results share a common feature: all these equations have \emph{at most one solution}. 
It is natural, then, to investigate whether \emph{all} polynomial equations with at most one solution can be solved in polynomial time. 
However, we still lack a characterization for this type of equation; indeed, for the same polynomial on one side of the equation, if we change the constant side, the number of solutions can change.
For this reason, considering the case where the polynomial is injective seems to be a relevant starting point, requiring a characterization of this class of polynomials, at least for the univariate case.

In this paper, we focus on this class of polynomials. 
After giving the necessary preliminaries (Section~\ref{sec:preliminaries}), we first focus on the semiring of unrolls (equivalence classes of FDDS sharing the same transient behavior, introduced in~\cite{article_arbre}) and prove that all univariate polynomials over unrolls are injective and that every equation $\sum_{i=0}^{m} \unroll{A_i} \unroll{X}^i = \unroll{B}$ can be solved in polynomial time (Section~\ref{section:poly_unroll}).
Then, we prove that all univariate polynomial $P = \sum_{i=0}^{m} A_i X^i$ with at least one cancelable $A_i$ (for some $i > 0$) is injective, by using a proof technique implying the existence of a polynomial-time algorithm for solving the associated equations (Section~\ref{section:cond_suf_poly_FDDS_inj}).
Finally, we show that this is also a necessary condition for injectivity (Section~\ref{section:cond_nec_poly_FDDS_inj}).

%%% Local Variables:
%%% mode: LaTeX
%%% TeX-master: "main"
%%% End:

	\section{Preliminaries}
\label{sec:preliminaries}

In this paper we compose and decompose FDDS in terms of two algebraic operations.
Thee \emph{sum} of two FDDS 
consists in the mutually
exclusive alternative between their behaviors, while the \emph{product} is their synchronous parallel execution.
The set of FDDS taken up to isomorphism, with these two operations, forms a semiring~\cite{article_fondateur}.
This semiring is isomorphic to the semiring of functional digraphs with the operations of disjoint union and direct product.
We recall~\cite{livreGraphe} that the direct product of two graphs $A$ and $B$ is the graph $C$
having vertices $V(C) = V(A) \times V(B)$ and edges
\[
E(C) = \{((u,u'), (v,v')) \mid (u,v) \in E(A), (u',v') \in E(B)\}.
\]
Within the class of \emph{connected} FDDS, we further distinguish those having a (necessarily  unique) cycle of size $1$ (\ie, a fixed point in terms of dynamics) and refer to them as \emph{dendrons}. 

In this semiring, the structure of the product is particularly rich. 
For example, the semiring is not factorial, \ie, there exist irreducible FDDS $A,B,C,D$ such that $A \neq C$ and $A \neq D$ but $AB = CD$.
This richness might originate from the interactions between the periodic and the transient behaviors of the two FDDS being multiplied.

% For brevity, we use the term \emph{trees} in reference to in-trees constituting the transient behaviors of FDDS.

For the periodic behavior of FDDS, the product of a connected $A$, with a cycle of size $p$,
and a connected $B$, with a cycle of size $q$,
generates $\gcd(p,q)$ connected components, each with a cycle of size $\lcm(p,q)$~\cite{livreGraphe}.
For the analysis of transient behaviors, we use the notion of unroll introduced in~\cite{article_arbre}.

\begin{definition}[Unroll]
	Let $A = (X,f)$ be an FDDS.
	For each state $u\in X$ and $k \in \mathbb{N}$, we denote by $f^{-k}(u) = \{ v \in X \mid f^k(v) = u \}$ the set of $k$-th preimages of $u$. 
	For each $u$ in a cycle of $A$, we call the \emph{unroll tree of $A$ in $u$} the infinite tree $\tree{t}_u = (V,E)$ having vertices $V = \{(s,k) \mid s \in f^{-k}(u), k \in \mathbb{N}\}$ and edges
	$E = \big\{ \big((v,k),(f(v),k-1) \big) \big\} \subseteq V^2$.
	We call \emph{unroll of $A$}, denoted $\unroll{A}$, the set of its unroll trees (see Fig.~\ref{fig:unroll}).
\end{definition}

\begin{figure}[t]
\centering
\includegraphics[page=1]{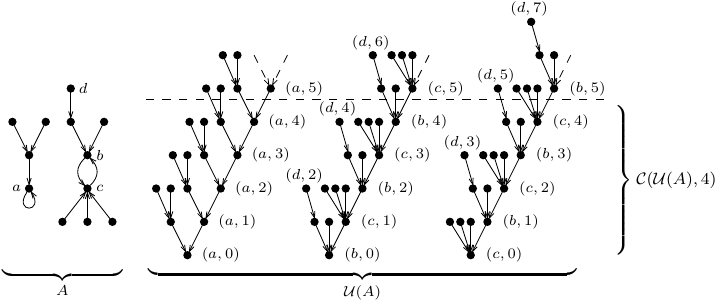}
\caption{An FDDS~$A$ with two connected components (on the left), a finite portion of its unroll~$\unroll{A}$ (on the right) and the cut~$\cut{\unroll{A}}{4}$ (below the dashed line). A few vertex names are shown in order to highlight their contribution to the unroll.}
\label{fig:unroll}
\end{figure}

In the rest of this paper, \emph{unrolls will always be taken up to isomorphism}, \ie, as multisets or sums (disjoint unions) of unlabeled trees; the equality sign will thus represent the isomorphism relation.

An unroll tree contains exactly one infinite branch, onto which are periodically rooted the trees representing the transient behaviors of the corresponding connected component. 
% In order to employ unrolls to study the transient behaviors with respect to the product,
We also exploit a notion of ``levelwise'' product on trees. 
For readability, we will denote trees and forests using bold letters (in lower and upper case
respectively) to distinguish them from FDDS.

\begin{definition}[Product of trees]\label{prodintrees} 
	Let $\tree{t}_1=(V_1,E_1)$ and $\tree{t}_2=(V_2,E_2)$ be two trees with roots $r_1$ and $r_2$, respectively.
        Their \emph{product} is the tree $\tree{t}_1 \times \tree{t}_2=(V,E)$ with vertices
        $V=\set{(u,v)\in V_1\times V_2 \mid \depth{u}=\depth{v}}$, where $\depth{u}$ is the length of the shortest path between $u$ and the root of its tree, and edges $E=\set{((u,u'),(v,v')) \mid (u,v)\in E_1, (u',v')\in E_2}\subseteq V^2$ (see Fig.~\ref{fig:tree-product}).
\end{definition}

\begin{figure}[t]
\centering
\includegraphics[page=2]{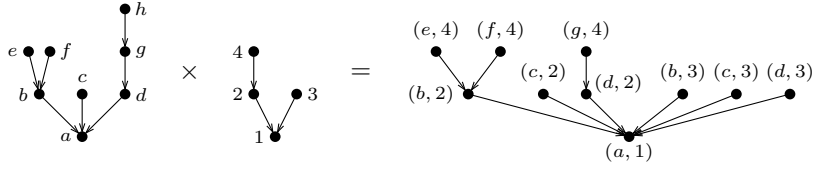}
\caption{The levelwise product of two finite trees. Remark how the depth of the result is given by the minimum depth of the two factors.}
\label{fig:tree-product}
\end{figure}

We deduce from~\cite{article_arbre} that the set of unrolls up to isomorphism, with disjoint union for addition and product of trees for multiplication, is a semiring with the infinite path as the multiplicative identity.
In addition, the unroll operation is a homomorphism between the semiring~$\mathbb{D}$ of FDDS and the semiring of unrolls.
% Here and in the following, the equality sign will denote graph isomorphism.

Unrolls have one major algorithmic issue: they are infinite. 
To overcome this obstacle, we will consider only a finite portion of the unrolls. 
For this purpose, % we first define the \emph{depth} of a finite tree as the maximum length of a shortest path between a leaf and its root.
we extend the notion of \emph{depth} to forests of finite trees by taking the maximum depth of its trees. 
We also define the \emph{depth of an FDDS} as the maximum depth among the trees rooted in one of its periodic states.

We remark that the set of forests of bounded depth~$d$ up to isomorphism is a also semiring (with the restrictions of the same tree sum and product operations) for all depth~$d$, where the multiplicative identity is the path of depth~$d$.

We can now define the \emph{cut} of $\tree{t}$ at depth $k \ge 0$, denoted $\cut{\tree{t}}{k}$,
as the induced sub-tree of $\tree{t}$ restricted to the vertices of depth lesser or equal to $k$ (see Fig.~\ref{fig:unroll}). 
This operation generalizes (homomorphically) to forests $\forest{f} = \tree{t}_{1} + \ldots + \tree{t}_{n}$ as  $\cut{\forest{F}}{k} = \cut{\tree{t}_{1}}{k} + \ldots + \cut{\tree{t}_{n}}{k}$. 
The cut of an unroll at a certain depth has already been used to exhibit properties of unrolls~\cite{article_arbre,kroot}.
This approach notably gave a characterization of \emph{cancelable} FDDS, \ie FDDS $A$ such that $AB = AC$ implies $B = C$ for all FDDS $B,C$.
Indeed, a FDDS $A$ is cancelable if and only if at least one of its connected components is a dendron~\cite[Theorem 30]{article_arbre}. 

One of the tools for the analysis of trees and forests is the total order $\le$ on finite and infinite trees introduced in~\cite{article_arbre}.
Indeed, this order is compatible with the product for infinite trees\footnote{This compatibility with the product of~$\le$ is sufficient for most applications in this paper; we refer the reader to the original paper for the actual definition.}, \ie, if $\tree{t}_1, \tree{t}_2$ are two infinite trees then $\tree{t}_1 \le \tree{t}_2$ if and only if $\tree{t}_1 \tree{t} \le \tree{t}_2 \tree{t}$ for all tree $\tree{t}$~\cite[Lemma 24]{article_arbre}. 
A similar property has been proved for the case of finite trees:

\begin{lemma}[\cite{article_arbre}]\label{lemma:ordre_comp_produit}
	Let $\tree{t}_1, \tree{t}_2, \tree{t}$ then $\cut{\tree{t}_1}{\depth{\tree{t}}} \le \cut{\tree{t}_2}{\depth{\tree{t}}}$ if and only if $\tree{t}_1 \tree{t} \le \tree{t}_2 \tree{t}$.
\end{lemma}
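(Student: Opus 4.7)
My plan rests on a single structural observation together with the totality of~$\le$.

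First, I would prove the identity $\tree{t}_1\tree{t} = \cut{\tree{t}_1}{\depth{\tree{t}}}\,\tree{t}$, and similarly for $\tree{t}_2$. This follows directly from Definition~\ref{prodintrees}: the vertex set of a tree product consists only of pairs $(u,v)$ with $\depth{u}=\depth{v}$, so any vertex of $\tree{t}_1$ at depth strictly greater than $\depth{\tree{t}}$ cannot contribute a vertex to $\tree{t}_1\tree{t}$. Cutting $\tree{t}_1$ at depth $\depth{\tree{t}}$ first merely removes those non-contributing vertices. This immediately reduces the lemma to the statement that, writing $d=\depth{\tree{t}}$ and $\tree{s}_i = \cut{\tree{t}_i}{d}$, we have $\tree{s}_1\le\tree{s}_2 \iff \tree{s}_1\tree{t}\le\tree{s}_2\tree{t}$, where all three factors now live in the semiring of forests of bounded depth~$d$ mentioned just before the lemma.

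For the forward direction, I would rely on the monotonicity of the product with respect to~$\le$. In the finite-depth semiring this can be obtained either by invoking the inductive/structural definition of $\le$ in~\cite{article_arbre} directly, or by the following transfer argument: extend $\tree{s}_1$ and $\tree{s}_2$ to infinite trees $\tree{s}_1^\infty,\tree{s}_2^\infty$ by grafting an infinite path at every leaf (a procedure which, by design of~$\le$, preserves the comparison between the two sides), apply the infinite-tree compatibility $\tree{s}_1^\infty\le\tree{s}_2^\infty \iff \tree{s}_1^\infty\tree{t}\le\tree{s}_2^\infty\tree{t}$ recalled just before the lemma, and then take a cut at depth~$d$, exploiting the first step to identify the cut product with $\tree{s}_i\tree{t}$.

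For the reverse direction, I would use the fact that $\le$ is a total order: assuming $\tree{s}_1\tree{t}\le\tree{s}_2\tree{t}$, if $\tree{s}_1>\tree{s}_2$ then the forward direction applied the other way yields $\tree{s}_2\tree{t}<\tree{s}_1\tree{t}$, contradicting the hypothesis (antisymmetry in the equality case is handled similarly since a strict inequality on cuts strictly propagates through the product once monotonicity is known).

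The principal obstacle is the monotonicity step, because $\le$ is imported from external work and its internal recursive structure is not recalled in the excerpt. The embedding-by-infinite-paths trick is the most economical way around this: it routes the finite-depth monotonicity through the already-quoted infinite-tree compatibility, at the cost of verifying that the embedding and the cut interact correctly with both $\le$ and the product. Everything else is bookkeeping around the depth-minimum identity of the first step.
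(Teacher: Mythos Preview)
The paper does not actually prove this lemma: it is quoted verbatim from~\cite{article_arbre} with no argument supplied, so there is no ``paper's own proof'' to compare against. Your first reduction, $\tree{t}_i\,\tree{t}=\cut{\tree{t}_i}{\depth{\tree{t}}}\,\tree{t}$, is correct and is exactly the kind of observation that makes the statement plausible; likewise, deriving the reverse implication from the forward one by totality of~$\le$ is sound.

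The gap is in your transfer argument for the forward direction. Grafting an infinite path at \emph{every} leaf of $\tree{s}_i=\cut{\tree{t}_i}{d}$ alters the tree at depths~$\le d$ whenever $\tree{s}_i$ has a leaf at some depth $k<d$: the grafted path contributes new vertices at depths $k+1,\ldots,d$. Hence $\cut{\tree{s}_i^\infty}{d}\ne\tree{s}_i$ in general, and by your own first step $\tree{s}_i^\infty\,\tree{t}=\cut{\tree{s}_i^\infty}{d}\,\tree{t}\ne\tree{s}_i\,\tree{t}$. So even granting that grafting preserves~$\le$ (which you assert ``by design'' but do not verify, and which is sensitive to how the code in~\cite{article_arbre} treats leaves versus internal unary vertices), the conclusion you extract after cutting is about the wrong pair of trees. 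A repair would have to graft only along a single branch reaching depth~$d$ (mirroring the ``one infinite branch'' shape of unroll trees) and then argue separately that this particular extension is monotone for~$\le$; at that point you are essentially reproving the finite-tree compatibility from scratch, which is what the cited lemma in~\cite{article_arbre} already does. Your fallback of ``invoking the inductive/structural definition of~$\le$ directly'' is therefore the honest route, and it coincides with what the present paper does: defer to~\cite{article_arbre}.
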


In the rest of the paper all polynomials will be implicitly univariate and, by convention, the symbol~$\forest{a}^0$ (i.e., the $0$-th power of a forest of finite trees~$\forest{a}$) will denote the path of length~$\depth{\forest{a}}$, which behaves as an identity for the product with forests of equal or less depth.

%%% Local Variables:
%%% mode: LaTeX
%%% TeX-master: "main"
%%% End:

	\section{Characterization of injective polynomials over unrolls}\label{section:poly_unroll}
	
	The search for a solution to polynomial equations over FDDS can be seen as the search for a compatible solution between solutions of polynomial equations of transient states and solutions of polynomial equations of periodic states.
	This is why we can start by looking for solutions for transient states.
	In order to do this, we will use the notion of unroll and determine the number of solutions to polynomial equations over unrolls.
	
	More precisely, our final goal is to prove the following theorem:
	\begin{theorem}\label{th:injPolyUnrolls}
		All univariate polynomial over unrolls are injective. 
	\end{theorem}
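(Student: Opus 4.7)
The plan is to argue by contradiction using strict monotonicity with respect to the total order~$\le$ on trees from~\cite{article_arbre}. Suppose $P(\unroll{X}) = P(\unroll{Y})$ where $P = \sum_{i=0}^{m} \unroll{A_i} Z^i$ has at least one non-zero coefficient of positive degree (the purely constant case is trivially non-injective and must be excluded by convention), and assume $\unroll{X} \neq \unroll{Y}$. The first step is to extend $\le$ to a total order on unrolls, viewed as finite multisets of unroll trees, by lexicographically comparing their sorted decreasing sequences; by totality we may then assume $\unroll{X} < \unroll{Y}$.

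The core of the argument rests on two compatibility properties of this extended order. First, strict compatibility with multiplication by a non-empty forest: $\forest{F} < \forest{G}$ and $\forest{H} \neq \emptyset$ imply $\forest{F}\forest{H} < \forest{G}\forest{H}$. Iterating yields $\unroll{X}^i < \unroll{Y}^i$ for every $i \ge 1$, and hence $\unroll{A_i}\unroll{X}^i \le \unroll{A_i}\unroll{Y}^i$ with strict inequality whenever $\unroll{A_i} \neq 0$. Second, compatibility with multiset sum: $\forest{F} < \forest{G}$ implies $\forest{F} + \forest{H} < \forest{G} + \forest{H}$, which follows from multiset cancellation. Summing the individual inequalities over $i = 0, \ldots, m$---with at least one strict inequality contributed by any non-zero $\unroll{A_i}$ of positive degree---yields $P(\unroll{X}) < P(\unroll{Y})$, contradicting the hypothesis.

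The main obstacle is lifting tree-level strict multiplicative monotonicity, provided by Lemma~24 of~\cite{article_arbre}, to the lex order on multisets of trees. Concretely, given $\forest{F} < \forest{G}$ in lex and $\forest{H}$ non-empty, one must show that the sorted sequence of pairwise products in $\forest{F}\forest{H}$ is lex-smaller than the corresponding sequence for $\forest{G}\forest{H}$. This demands a careful case analysis on the position of the first disagreement between the sorted sequences of $\forest{F}$ and $\forest{G}$, and on how multiplication by the individual trees of $\forest{H}$ preserves that disagreement after merging and resorting the resulting products. A secondary technical point is the edge case where $\unroll{X}$ or $\unroll{Y}$ is the empty forest, which can be ruled out directly from the non-constancy of $P$ (since $P(\emptyset) = \unroll{A_0}$, while $P$ evaluated on a non-empty unroll must carry the contribution of at least one non-zero $\unroll{A_i}\unroll{Z}^i$).
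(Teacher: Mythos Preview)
Your strategy is different from the paper's and is plausible in outline, but it leaves precisely the decisive step unproved. You yourself flag that everything rests on lifting the tree-level strict compatibility of~$\le$ (Lemma~24 of~\cite{article_arbre}) to the lexicographic order on sorted multisets, i.e.\ on showing that $\forest{F} < \forest{G}$ and $\forest{H}\neq\emptyset$ imply $\forest{F}\forest{H} < \forest{G}\forest{H}$; yet the proposal only names this as ``the main obstacle'' and promises ``a careful case analysis'' without supplying one. This lemma \emph{is} the proof: once it is established, the remaining lines (iterate to powers, multiply by coefficients, sum with additive compatibility) are routine. A viable route does exist---compare the multiplicities of the maximal product $\max(\forest{F})\cdot\max(\forest{H})$ on both sides, which by strict tree-level monotonicity equal $\mathrm{mult}_{\forest{F}}(\max\forest{F})\cdot\mathrm{mult}_{\forest{H}}(\max\forest{H})$ and $\mathrm{mult}_{\forest{G}}(\max\forest{G})\cdot\mathrm{mult}_{\forest{H}}(\max\forest{H})$ respectively, then peel off a common maximal tree and induct---but none of this is written, and the vague ``merging and resorting'' picture you sketch is genuinely delicate because cross-products from distinct pairs can coincide or interleave.

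The paper bypasses this lemma entirely. It first passes to finite cuts $\cut{\cdot}{n}$, which are semiring homomorphisms, and proves injectivity over forests of \emph{finite} trees of bounded depth (Proposition~\ref{prop:injDesFinis}). That argument does not use global monotonicity of~$P$; instead it locates the minimum tree of $P(\forest{X})$, shows via Lemmas~\ref{lemma:behaviorOrderWithProduct} and~\ref{lemme:ordreProdFin} that this minimum determines $\min(\forest{X})$ uniquely, and then strips off one tree of~$\forest{X}$ at a time. Injectivity over unrolls follows because two distinct unrolls must differ at some finite depth. Beyond avoiding the multiset-monotonicity lemma, the paper's reconstruction argument has an algorithmic payoff (Theorem~\ref{theorem:equation_forest_P}); your monotonicity route, even once completed, would yield only the bare injectivity statement.
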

	
	For the proof of this theorem, we begin by remarking that for all polynomial~$P$ over unrolls, and for all FDDS $X,Y$ the equality $P(\unroll{X}) = P(\unroll{Y})$ implies $\cut{P(\unroll{X})}{n} = \cut{P(\unroll{Y})}{n}$ for all $n$. In order to study polynomial equations over forests of finite trees, we start by proving one technical lemma on the behavior of the order on powers of trees with the same depth.

		\begin{lemma}\label{lemma:behaviorOrderWithProduct}
			Let $\tree{x}, \tree{y}$ be two finite trees with the same depth and $k$ be a  positive integer. Then $\tree{y}^k \ge \tree{x}^k$ if and only if $\tree{y} \ge \tree{x}$.
		\end{lemma}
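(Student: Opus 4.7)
The plan is to prove the forward direction $\tree{y} \ge \tree{x} \Rightarrow \tree{y}^k \ge \tree{x}^k$ by induction on $k$, and then to obtain the converse from the fact that $\le$ is a total order on finite trees of bounded depth.

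First, I observe that if $\tree{x}$ and $\tree{y}$ share a common depth $d$, then by the remark following Definition~\ref{prodintrees} (the product takes the minimum depth), every power $\tree{x}^i$ and $\tree{y}^i$ also has depth exactly $d$. In particular, cutting any such power at depth $d$ leaves it unchanged, so Lemma~\ref{lemma:ordre_comp_produit} applied between two trees both of depth $d$ simplifies to the clean statement: $\tree{a} \le \tree{b}$ if and only if $\tree{a}\tree{c} \le \tree{b}\tree{c}$, whenever $\tree{a},\tree{b},\tree{c}$ all have depth $d$.

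For the forward direction, I proceed by induction on $k$. The base case $k=1$ is trivial. For the inductive step, I split the comparison into two applications of Lemma~\ref{lemma:ordre_comp_produit}:
\[
\tree{x}^{k} = \tree{x}^{k-1}\tree{x} \;\le\; \tree{x}^{k-1}\tree{y} \;\le\; \tree{y}^{k-1}\tree{y} = \tree{y}^{k}.
\]
The left inequality uses the hypothesis $\tree{x} \le \tree{y}$ together with Lemma~\ref{lemma:ordre_comp_produit} (where, as noted above, the cut at depth $d$ is the identity since $\tree{x}$ and $\tree{y}$ both have depth $d$). The right inequality uses the induction hypothesis $\tree{x}^{k-1} \le \tree{y}^{k-1}$ (both of depth $d$) and the same lemma.

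For the converse, I argue by contrapositive using the totality of $\le$. If $\tree{y} \not\ge \tree{x}$, then $\tree{x} > \tree{y}$ strictly. Since Lemma~\ref{lemma:ordre_comp_produit} is an ``if and only if'', equality transfers in both directions, so strict inequality is preserved by the same two-step argument: the induction gives $\tree{x}^{k-1} > \tree{y}^{k-1}$, and then
\[
\tree{y}^{k} = \tree{y}^{k-1}\tree{y} \;<\; \tree{x}^{k-1}\tree{y} \;<\; \tree{x}^{k-1}\tree{x} = \tree{x}^{k},
\]
so $\tree{y}^k \not\ge \tree{x}^k$. The only subtle point — and what I expect to be the mild technical obstacle — is making sure throughout that all trees involved genuinely share the depth $d$, so that Lemma~\ref{lemma:ordre_comp_produit} can be used without its cutting operation actually modifying anything; this is what makes the induction step go through cleanly.
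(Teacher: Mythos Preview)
Your argument is correct. The forward direction is essentially identical to the paper's: both proceed by induction on~$k$, using the compatibility of the order with the product (the paper cites \cite[Corollary~20]{article_arbre} directly, you derive it from Lemma~\ref{lemma:ordre_comp_produit} after observing that the cut at depth~$d$ is the identity on trees of depth~$d$, which is fine).

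For the converse, your route differs slightly from the paper's. You argue by contrapositive: if $\tree{x} > \tree{y}$, then the strict version of the forward direction (which you correctly justify via the ``if and only if'' in Lemma~\ref{lemma:ordre_comp_produit}) yields $\tree{x}^k > \tree{y}^k$ directly. The paper instead assumes $\tree{y}^k \ge \tree{x}^k$ and $\tree{x} > \tree{y}$, and performs a \emph{descent} induction: from $\tree{x} > \tree{y}$ one gets $\tree{x}^k \ge \tree{x}^{k-1}\tree{y}$, hence $\tree{y}^{k-1}\tree{y} \ge \tree{x}^{k-1}\tree{y}$, and then invokes a cancellation lemma \cite[Lemma~21]{article_arbre} to conclude $\tree{y}^{k-1} \ge \tree{x}^{k-1}$, iterating down to $\tree{y} \ge \tree{x}$. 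Your version is marginally more economical since it avoids the extra appeal to cancellation and reuses the forward direction wholesale; the paper's version makes the cancellation step explicit. Both are short and rest on the same underlying compatibility of~$\le$ with the product.
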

	
		\begin{proof}
			If $\tree{y} \ge \tree{x}$ and $\depth{\tree{x}} = \depth{\tree{y}}$, by \cite[Corollary 20]{article_arbre}, we have $\tree{y}^2 \ge \tree{x}^2$ and $\depth{\tree{y}^2} = \depth{\tree{x}^2}$. 
			Then, by induction, we obtain $\tree{y}^k \ge \tree{x}^k$.
			
			For the other direction, assume that $\tree{y}^k \ge \tree{x}^k$ and, by contradiction, that $\tree{x} > \tree{y}$. 
			Since $\depth{\tree{x}} = \depth{\tree{y}}$, by \cite[Corollary 20]{article_arbre}, we have $\tree{y}^k \tree{x} \ge \tree{y}^k \tree{y}$. 
			Thus, by \cite[Lemma 21]{article_arbre}, we have $\tree{y}^{k-1} \ge \tree{x}^{k-1}$. 
			By induction, we obtain $\tree{y} \ge \tree{x}$, a contradiction.
	\end{proof}
		
		Let $\forest{A}$ be a forest and $i\ge0$ be an integer. 
		We denote by % $\gamma(\forest{A},i)$ the sub-multiset of trees of $\forest{A}$ having depth \emph{exactly} $i$; we also
	        % denote by
                $\Gamma(\forest{A},i)$ the sub-multiset of trees of $\forest{A}$ where each tree has depth \emph{at least} $i$.
		We focus our attention on the trees in $P(\forest{X})$ with a
		certain depth and form.
		
		\begin{lemma}\label{lemme:ordreProdFin}
	
			Let $P = \polyF{1}{m}{a}{x}$ be a polynomial without constant term over forests and $\forest{X}= \tree{x}_1 + \ldots + \tree{x}_n$ be a forest sorted in nondecreasing order.
			Let $\forest{A}_i = \sum_{j = 1}^{m_i} \tree{a}_{i,j}$ be in nondecreasing order, hence $\tree{a}_{i,1} = \min(\forest{a}_i)$, for all~$i$.
			Then, there exists $\alpha \in \{1, \ldots, m\}$ such that $\min_{i}(\forest{A}_i \forest{X}^{i-1} \tree{x}_k)$ is isomorphic to $\tree{a}_{\alpha,1} \tree{x}_1^{\alpha-1}\tree{x}_k$.
		\end{lemma}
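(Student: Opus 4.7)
My approach is to identify, for each fixed $i$, the smallest tree of the forest $\forest{A}_i \forest{X}^{i-1} \tree{x}_k$, and then to minimize across $i \in \{1, \ldots, m\}$. Since $\le$ is a total order on finite trees, the minimization over $i$ is immediate and yields the promised $\alpha$. The heart of the argument is therefore the per-$i$ claim that $\min(\forest{A}_i \forest{X}^{i-1} \tree{x}_k) = \tree{a}_{i,1} \tree{x}_1^{i-1} \tree{x}_k$.

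For the per-$i$ claim, distributivity of the product over the sum shows that every tree in $\forest{A}_i \forest{X}^{i-1} \tree{x}_k$ has the form $\tree{a}_{i,j} \tree{x}_{k_1} \cdots \tree{x}_{k_{i-1}} \tree{x}_k$ for some choice of indices, while $\tree{a}_{i,1} \tree{x}_1^{i-1} \tree{x}_k$ is the ``all-minimum'' choice. I would prove minimality by a chain of single-factor substitutions, replacing $\tree{a}_{i,j}$ by $\tree{a}_{i,1}$ and each $\tree{x}_{k_l}$ by $\tree{x}_1$ one at a time, and showing that each swap does not increase the product under $\le$. Grouping the remaining factors into a single tree $\tree{t}$, each single-factor comparison takes the shape $\tree{u} \tree{t} \le \tree{v} \tree{t}$; Lemma~\ref{lemma:ordre_comp_produit} reduces it to $\cut{\tree{u}}{\depth{\tree{t}}} \le \cut{\tree{v}}{\depth{\tree{t}}}$, which follows from the sorted hypothesis on $\forest{A}_i$ and $\forest{X}$.

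The main obstacle I anticipate is the interplay between the order $\le$ and the depths of the factors: because the product has depth equal to the minimum depth among its factors, replacing one tree by a smaller one can shift the depth of the product, so a naive appeal to ``multiplication preserves $\le$'' is not available in the finite-depth setting. The systematic fix is to route every single-factor substitution through Lemma~\ref{lemma:ordre_comp_produit}, which collapses the product comparison to a cut comparison at a fixed depth $\depth{\tree{t}}$ where the sorted hypothesis applies. Lemma~\ref{lemma:behaviorOrderWithProduct} plays the supporting role of keeping $\depth{\tree{x}_1^j} = \depth{\tree{x}_1}$ constant across the iterated substitutions, so the relevant cut-depth in the inductive step does not drift as more $\tree{x}_{k_l}$ factors are replaced.

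Once the per-$i$ minimum is pinned down as $\tree{a}_{i,1} \tree{x}_1^{i-1} \tree{x}_k$, I take a least element among these $m$ trees under $\le$ and declare its index to be $\alpha$, completing the proof.
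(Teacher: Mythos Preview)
Your per-$i$ reduction—that $\min(\forest{A}_i \forest{X}^{i-1}\tree{x}_k)=\tree{a}_{i,1}\tree{x}_1^{i-1}\tree{x}_k$—is sound and corresponds to part of the paper's argument. The gap is in your final step. You pick $\alpha$ by minimising $\tree{a}_{i,1}\tree{x}_1^{i-1}\tree{x}_k$ over $i$ \emph{for the given $k$}, so your $\alpha$ is a priori a function of $k$. But the lemma is invoked in Proposition~\ref{prop:injDesFinis} precisely in the form ``there exists one $\alpha$ such that for each $k$ the minimum equals $\tree{a}_{\alpha,1}\tree{x}_1^{\alpha-1}\tree{x}_k$'': the induction there divides the minimum at step $k=2$ by the \emph{same} factor $\tree{a}_{\alpha,1}\tree{x}_1^{\alpha-1}$ that arose at step $k=1$. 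Your plan does not show that the minimiser can be taken independent of $k$; the comparison between $\tree{a}_{\alpha,1}\tree{x}_1^{\alpha-1}\tree{x}_k$ and $\tree{a}_{j,1}\tree{x}_1^{j-1}\tree{x}_k$ reduces, via Lemma~\ref{lemma:ordre_comp_produit}, to a cut at depth $\depth{\tree{x}_k}$, which varies with $k$, so there is no free uniformity.

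The paper secures this by choosing $\alpha$ \emph{before} introducing $k$: it is the index realising the global minimum $\tree{a}_{\alpha,1}\tree{x}_1^{\alpha}$ of $P(\forest{X})$. It then proves $\tree{a}_{\alpha,1}\tree{x}_1^{\alpha-1}\tree{x}_k \le \tree{a}\,\tree{x}_1^{j-1}\tree{x}_k$ for every $j$, every $\tree{a}\in\forest{A}_j$, and every $k$ by a three-case analysis on $(\alpha,j)$; the case $\alpha=1$, $j\neq 1$ is the delicate one and forces a return to the code definition of $\le$ to locate the depth of the leftmost difference. Your outline bypasses exactly this cross-degree comparison, so completing it to the uniform-in-$k$ statement would require reinstating the paper's case split.
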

		
		\begin{proof}
			Let $\tree{t}_{min}$ be the smallest tree in $P(\forest{X})$. 
			Hence, there exists $\alpha \in \{1, \ldots, m\}$ such that $\tree{t}_{min} \in \forest{A}_\alpha \forest{X}^\alpha$. 
			In addition, by~\cite[Lemma 24]{article_arbre}, we have $\tree{x}_1^2 \le \tree{x}_1 \tree{x}_j$ for all $j$. 
			By induction, we deduce that $\tree{t}_{min} \in \forest{A}_\alpha \tree{x}_1^\alpha$. 
			Analogously, we have $\tree{a}_{\alpha,1} \tree{x}_1^\alpha \le \tree{a}_{\alpha,j} \tree{x}_1^\alpha$ for all $\tree{a}_{\alpha,j} \in \forest{A}_\alpha$, and we conclude that $\tree{t}_{min} = \tree{a}_{\alpha,1} \tree{x}_1^\alpha$.
			
			We recall that $\tree{t}_1 \tree{t}_3 < \tree{t}_2 \tree{t}_3$ if and only if $\cut{\tree{t}_1}{\depth{\tree{t}_3}} < \cut{\tree{t}_2}{\depth{\tree{t}_3}}$ (Lemma~\ref{lemma:ordre_comp_produit}). 
			Therefore, for all $j$ and $\tree{a} \in \forest{A}_j$, we have:
			\[\tree{a}_{\alpha,1} \tree{x}_1^\alpha \le \tree{a} \tree{x}_1^j \Leftrightarrow \cut{\tree{a}_{\alpha,1} \tree{x}_1^{\alpha-1}}{\depth{\tree{x}_1}} \le \cut{\tree{a} \tree{x}_1^{j-1}}{\depth{\tree{x}_1}}\]
			Three cases are possible.
			First, if  $\alpha = j$ then $\tree{a}_{\alpha,1} \le  \tree{a}$ by minimality of $\tree{a}_{\alpha,1}$.
			This implies that $\tree{a}_{\alpha,1} \tree{x}_1^{\alpha-1} \tree{x}_k \le \tree{a} \tree{x}_1^{\alpha-1} \tree{x}_k$.
	
			Second, if~$\alpha \ne j$ and $\alpha \neq 1$, then $\cut{\tree{a}_{\alpha,1} \tree{x}_1^{\alpha-1}}{\depth{\tree{x}_1}} = \tree{a}_{\alpha,1} \tree{x}_1^{\alpha-1}$. 
			Since $\cut{\tree{a} \tree{x}_1^{j-1}}{\depth{\tree{x}_1}} \le\tree{a} \tree{x}_1^{j-1}$, we have $\tree{a}_{\alpha,1} \tree{x}_1^{\alpha-1} \tree{x}_k \le \tree{a} \tree{x}_1^{j-1} \tree{x}_k$. 
			
			Thirdly, if $\alpha \ne j$ but $\alpha = 1$, then either
                        \[
                        \cut{\tree{a}_{\alpha,1}\tree{x}_1^{\alpha-1}}{\depth{\tree{x}_1}} = \cut{\tree{a} \tree{x}_1^{j-1}}{\depth{\tree{x}_1}} = \tree{a} \tree{x}_1^{j-1}
                        \]
                        which implies $\tree{a}_{\alpha,1}\tree{x}_1^{\alpha} = \tree{a}_{j,1}\tree{x}_1^{j}$, and we can just choose $\alpha=j$, which is already covered by the second case. 
			Otherwise $\cut{\tree{a}_{\alpha,1}\tree{x}_1^{\alpha-1}}{\depth{\tree{x}_1}} < \tree{a} \tree{x}_1^{j-1}$. 
			In this case, we need to recall the definition of the order from~\cite{article_arbre}~and the code upon which it is defined. 
			Indeed, by~\cite[Lemma 17]{article_arbre}, we know that the leftmost difference between the codes of $\cut{\tree{a}_{\alpha,1}\tree{x}_1^{\alpha-1}}{\depth{\tree{x}_1}}$ and $\cut{\tree{a} \tree{x}_1^{j-1}}{\depth{\tree{x}_1}}$ corresponds to a node of depth at most $\depth{\tree{x}_1} - 1$. 
			Therefore, this difference is propagated into $\cut{\tree{a}_{\alpha,1}\tree{x}_1^{\alpha-1}}{\depth{\tree{x}_1}} \tree{x}_k$ and $\tree{a} \tree{x}_1^{j-1} \tree{x}_k$ if $\depth{\tree{x}_k} \ge \depth{\tree{x}_1}-1$, and $\tree{a}_{\alpha,1} \tree{x}_1^{\alpha-1} \tree{x}_k < \tree{a} \tree{x}_1^{j-1} \tree{x}_k$; if $\depth{\tree{x}_k} < \depth{\tree{x}_1}-1$ then $\tree{a}_{\alpha,1} \tree{x}_1^{\alpha-1} \tree{x}_k = \tree{a} \tree{x}_1^{j-1} \tree{x}_k$.
			
			Hence, in all cases we have $\tree{a}_{\alpha,1} \tree{x}_1^{\alpha-1} \tree{x}_k \le \tree{a} \tree{x}_1^{j-1} \tree{x}_k$.
			Since for all $p$ we have $\tree{x}_1^{p-1} \le \tree{t}$ with $\tree{t} \in \forest{X}^{p-1}$, the lemma follows.
		\end{proof}
		
		Thanks to Lemma~\ref{lemme:ordreProdFin} we can prove the first important result of this section. Remark that, due to closure under sums and products, the set of forests of finite trees of depth at most~$d_{max}$ is a semiring for every natural number~$d_{max}$.
		
		\begin{proposition}\label{prop:injDesFinis}
	        Let~$P = \sum_{i=0}^{m} \forest{A}_i \forest{X}^i$ be a polynomial with~$d_{max} = \max_{i > 0}(\depth{\forest{A}_i)}$ and~$\depth{\forest{A}_0} \le d_{max}$. Then~$P$ is injective over the semiring of forests of finite trees of depth at most~$d_{max}$. 
	        
	        % Let~$\forest{A}_0$ and~$\forest{A}_1, \ldots, \forest{A}_m$ be forests of finite trees with~$d_{max} = \max_{i > 0}(\depth{\forest{A}_i)}$ and~$\depth{\forest{A}_0} \le d_{max}$. Consider the polynomial~$P = \sum_{i=0}^{m} \forest{A}_i \forest{X}^i$ over the semiring of forests of finite trees of depth at most~$d_{max}$. Then~$P$ is injective.
			% Let $d_{max}$ be a natural number and $P = \sum_{i=1}^{m} \forest{A}_i \forest{X}^i$ a polynomial over forests without constant terms with depth at most $d_{max}$\mr{formulation?} such that $d_{max} = \max(\depth{\forest{A}_i})$.
			% Then $P$ is injective.
	%		Then, for all $0\le d \le d_{max}$ then $\sum_{i=1}^{m} \Gamma(\forest{A}_i,d) \Gamma(\forest{X}^i, d)$ is injective.
		\end{proposition}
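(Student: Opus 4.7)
I would proceed by strong induction on the number of trees $n=|\forest{X}|$ in~$\forest{X}$.

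\textbf{Reduction.} Taking cardinalities of the multisets in $P(\forest{X})=P(\forest{Y})$ yields the integer identity $\sum_{i=0}^{m}|\forest{A}_i|\cdot|\forest{X}|^i=\sum_{i=0}^{m}|\forest{A}_i|\cdot|\forest{Y}|^i$. Because at least one $\forest{A}_i$ with $i>0$ must be nonempty (otherwise $d_{max}$ would not be defined), this integer polynomial is strictly monotonic in $|\forest{X}|$ and therefore injective over $\mathbb{N}$; consequently $|\forest{X}|=|\forest{Y}|=n$. Moreover, the constant term $\forest{A}_0$ appears as a submultiset on either side and can be cancelled, so we may work with the equation
\[
\sum_{i=1}^{m}\forest{A}_i\forest{X}^i=\sum_{i=1}^{m}\forest{A}_i\forest{Y}^i.
\]

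\textbf{Base case.} When $n=0$, both $\forest{X}$ and $\forest{Y}$ are empty, and the claim is immediate.

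\textbf{Inductive step.} Sort both forests in nondecreasing order as $\tree{x}_1\leq\cdots\leq\tree{x}_n$ and $\tree{y}_1\leq\cdots\leq\tree{y}_n$. Applying Lemma~\ref{lemme:ordreProdFin} (specialized at $k=1$) to each side, the smallest tree on the left is isomorphic to $\tree{a}_{\alpha,1}\tree{x}_1^{\alpha}$ for some $\alpha\in\{1,\ldots,m\}$, and on the right to $\tree{a}_{\beta,1}\tree{y}_1^{\beta}$ for some $\beta$. Equality of the two sides forces these minima to coincide. I would then invoke Lemma~\ref{lemma:behaviorOrderWithProduct} (monotonicity of powers of trees of equal depth) together with Lemma~\ref{lemma:ordre_comp_produit} (compatibility of the cut with the product) to compare the two sides of $\tree{a}_{\alpha,1}\tree{x}_1^{\alpha}=\tree{a}_{\beta,1}\tree{y}_1^{\beta}$ level by level, and conclude $\tree{x}_1=\tree{y}_1$ (and, concurrently, $\alpha=\beta$). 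Once this identification is established, I would set $\forest{X}'=\forest{X}-\tree{x}_1$ and $\forest{Y}'=\forest{Y}-\tree{y}_1$, expand $\forest{X}^i=(\tree{x}_1+\forest{X}')^i$ by distributivity, cancel the purely $\tree{x}_1$-dependent contributions that are identical on both sides, and rewrite the remaining equation as $P'(\forest{X}')=P'(\forest{Y}')$ for a polynomial $P'$ whose coefficients are obtained from the $\forest{A}_j$ and powers of $\tree{x}_1$; the leading coefficient $\forest{A}_m$ is unchanged, and all depths remain bounded by $d_{max}$. The induction hypothesis applied to $P'$ then delivers $\forest{X}'=\forest{Y}'$, whence $\forest{X}=\forest{Y}$.

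\textbf{Main obstacle.} The most delicate step is extracting $\tree{x}_1=\tree{y}_1$ (and $\alpha=\beta$) from the single identity $\tree{a}_{\alpha,1}\tree{x}_1^{\alpha}=\tree{a}_{\beta,1}\tree{y}_1^{\beta}$: when $\alpha\neq\beta$ one cannot cancel directly, and a finer analysis of the tree code in the spirit of the third case of the proof of Lemma~\ref{lemme:ordreProdFin} (using~\cite{article_arbre}) is required. A secondary, more bookkeeping-level difficulty is ensuring that the reduced polynomial $P'$ inherits the hypotheses of the proposition—in particular, that its positive-degree coefficients still have depth at most $d_{max}$—which should follow from the closure of the depth bound under sums and products of forests already of depth at most $d_{max}$.
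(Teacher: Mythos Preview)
Your overall skeleton—strip the constant term, locate the minimum tree on each side via Lemma~\ref{lemme:ordreProdFin}, show $\tree{x}_1=\tree{y}_1$, then induct—matches the paper's. The execution differs in two places, and one of them is a genuine gap.

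\textbf{The step $\tree{x}_1=\tree{y}_1$.} You propose to extract this (and simultaneously $\alpha=\beta$) from the single identity $\tree{a}_{\alpha,1}\tree{x}_1^{\alpha}=\tree{a}_{\beta,1}\tree{y}_1^{\beta}$ by a ``level-by-level'' analysis of tree codes. That identity alone is insufficient: one can manufacture trees with $\tree{a}\,\tree{x}^{\alpha}=\tree{b}\,\tree{y}^{\beta}$ and $\tree{x}\ne\tree{y}$, so no argument using only this equation can succeed. The paper exploits one further fact you have available but did not use: the tree $\tree{a}_{\beta,1}\tree{x}_1^{\beta}$ \emph{also belongs to} $P(\forest{X})$. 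If $\tree{x}_1<\tree{y}_1$, then Lemma~\ref{lemma:behaviorOrderWithProduct} and compatibility with the product give $\tree{a}_{\beta,1}\tree{x}_1^{\beta}<\tree{a}_{\beta,1}\tree{y}_1^{\beta}=\tree{a}_{\alpha,1}\tree{x}_1^{\alpha}=\min P(\forest{X})$, contradicting minimality. So $\tree{x}_1=\tree{y}_1$ falls out in one line, and $\alpha=\beta$ is neither needed nor, in general, true (ties may occur).

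\textbf{The inductive reduction.} Your plan to rewrite the equation as $P'(\forest{X}')=P'(\forest{Y}')$ for a new polynomial $P'$ is workable; your worry about depths is in fact unfounded, since each new coefficient $\forest{A}'_k=\sum_{i\ge k}\binom{i}{k}\forest{A}_i\tree{x}_1^{i-k}$ contains $\forest{A}_k$ as a summand, forcing $d'_{max}=d_{max}$. The paper, however, avoids this bookkeeping by using the full strength of Lemma~\ref{lemme:ordreProdFin}: the \emph{same} index $\alpha$ governs, for every $k$, the minimum tree of $P(\forest{X})$ having $\tree{x}_k$ as a factor. Hence after subtracting $P(\tree{x}_1)=P(\tree{y}_1)$ from both sides, the new minima are $\tree{a}_{\alpha,1}\tree{x}_1^{\alpha-1}\tree{x}_2$ and $\tree{a}_{\beta,1}\tree{y}_1^{\beta-1}\tree{y}_2$; since $\tree{a}_{\alpha,1}\tree{x}_1^{\alpha-1}=\tree{a}_{\beta,1}\tree{y}_1^{\beta-1}$ (cancel one copy of $\tree{x}_1=\tree{y}_1$ from the equality of the original minima), this yields $\tree{x}_2=\tree{y}_2$ directly, and so on—no new polynomial is ever formed.
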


		\begin{proof}
	        Let us first consider polynomials~$P$ without a constant term.
			Let $\forest{X} = \tree{x}_1 + \ldots + \tree{x}_{m_x}$ and $\forest{Y} = \tree{y}_1 + \ldots + \tree{y}_{m_y}$ be two forest (sorted in nondecreasing order), and assume $P(\forest{X}) = P(\forest{Y})$. 
			Thus, by Lemma~\ref{lemme:ordreProdFin}, there exist $i, j \in \{1, \ldots, m\}$ such that, for each $k$, the smallest tree $\tree{t}_s$ with factor $\tree{x}_k$ (resp., $\tree{y}_k$) in $\polyF{1}{m}{a}{x}$ is isomorphic to $\tree{a}_{i,1} \tree{x}_1^{i-1}\tree{x}_k$ (resp., $\tree{a}_{j,1} \tree{y}_1^{j-1}\tree{y}_k$), with $\tree{a}_{i,1} = \min(\forest{A}_i)$ and $\tree{a}_{j,1} = \min(\forest{A}_j)$. 
			From the hypothesis, we deduce that $\tree{a}_{i,1} \tree{x}_1 = \min(P(\forest{x})) =  \min(P(\forest{y})) = \tree{a}_{j,1} \tree{y}_1$. 
			
			First, we show that $\tree{x}_1 = \tree{y}_1$.
			By contradiction, and without loss of generality, assume $\tree{x}_1 < \tree{y}_1$.
			Then, by Lemma~\ref{lemma:behaviorOrderWithProduct}, it follows that $\tree{x}_1^{j} < \tree{y}_1^{j}$.
			By \cite[Lemma 24]{article_arbre}, we deduce that $\tree{a}_{j,1} \tree{x}_1^{j} < \tree{a}_{j,1} \tree{y}_1^j = \tree{a}_{i,1} \tree{x}_1^i$.
			Nevertheless, $\tree{a}_{j,1} \tree{x}_1^{j} \in P(\forest{X})$,
			contradicting the minimality of $\tree{a}_{i,1} \tree{x}_1^i$. 
			
			Finally, since $\tree{x}_1 = \tree{y}_1$, we deduce that $P(\forest{X}) - P(\tree{x}_1) = P(\forest{Y}) - P(\tree{y}_1)$.  
			In addition, the smallest tree in $P(\forest{X}) - P(\tree{x}_1)$ is $\tree{a}_{i,1} \tree{x}_1^{i-1} \tree{x}_2$ and the smallest tree in $P(\forest{y}) - P(\tree{y}_1)$ is $\tree{a}_{j,1} \tree{y}_1^{j-1} \tree{y}_2$.  
			However, since $\tree{a}_{i,1} \tree{x}_1^i = \tree{a}_{j,1} \tree{y}_1^j$ and $\tree{x}_1 = \tree{y}_1$, we have $\tree{a}_{i,1} \tree{x}_1^{i-1} = \tree{a}_{j,1} \tree{y}_1^{j-1}$. 
			Thus, $\tree{x}_2 = \tree{y}_2$. 
			By induction, we conclude that $\forest{X} = \forest{Y}$.
	
	        This proof extends to polynomials with a constant term~$\forest{A}_0$. Indeed, let~$Q = P + \forest{A}_0$ and suppose~$Q(\forest{X}) = Q(\forest{Y})$. Then $Q(\forest{X}) - \forest{A}_0 = Q(\forest{Y}) - \forest{A}_0$, that is~$P(\forest{X}) = P(\forest{Y})$, which implies~$\forest{X} = \forest{Y}$. Thus, $Q$ is also injective.
		\end{proof}
	
	        Our main theorem follows directly from Proposition~\ref{prop:injDesFinis}.
	
	   \begin{proof}[Proof of Theorem~\ref{th:injPolyUnrolls}]
	        Suppose~$P(\unroll{X}) = P(\unroll{Y})$. Then~$\cut{P(\unroll{X})}{n} = \cut{P(\unroll{Y})}{n}$ for all~$n$. If~$\unroll{X}$ and~$\unroll{Y}$ were different, there would exist an~$n$ such that~$\cut{\unroll{X}}{n} \ne \cut{\unroll{Y}}{n}$. However, by Proposition~\ref{prop:injDesFinis} and since~$\cut{\cdot}{n}$ is a homomorphism, we have~$\cut{\unroll{X}}{n} = \cut{\unroll{Y}}{n}$ for all~$n$.
	    \end{proof}
	        
		% From this theorem, we can draw two remarks. Indeed, by Proposition~\ref{prop:polyUnroll2PolyForestFini}, we have that the polynomial $\sum_{i=1}^{m} \unroll{A_i} \unroll{X}^i$ is injective if and only if $\sum_{i=1}^{m} \cut{\unroll{A_i}}{n} \cut{\forest{y}}{n}^i$ is injective for a certain positive integer $n$.
		% And, for this kind of polynomial, each forest has depth at most $n$.  
		% Then, by Proposition~\ref{prop:injDesFinis}, the resulting function is injective.
	        % and the proof of Theorem \ref{th:injPolyUnrolls} follows.
		
		The proof of Proposition~\ref{prop:injDesFinis} also suggests an algorithm (inspired by \cite[Algorithm~1]{kroot}) for solving polynomial equations over forests of finite trees. 
		Indeed, starting with the maximum depth $d_{max}$ (Fig.~\ref{fig:algo} \textcircled{1}), we can solve the equation $\Gamma(P(\forest{x}), d_{max}) = \Gamma(\forest{b}, d_{max})$ \textcircled{2}.
		In fact, we construct the solution $\Gamma(\forest{X},d_{max})$ by first solving the equation $\min(\Gamma(\forest{A}_i,d_{max})) \tree{x}_1^i = \min(\Gamma(\forest{b}, d_{max}))$ \textcircled{3}\textcircled{4} for an~$i \in \{1, \ldots, m\}$. 
		The tree~$\tree{x}_1$ is minimal among the trees of maximum depth appearing in the solution~$\forest{X}$. 
		Thus, by Lemma~\ref{lemme:ordreProdFin}, either we can construct the solution $Sol$ inductively \textcircled{4} or $P(Sol) \nsubseteq \forest{B}$ \textcircled{3}, and we can deduce that this~$i$ is not the one appearing in the statement of Lemma~\ref{lemme:ordreProdFin}, and we try the next value of~$i$, if any.
		Indeed, the $j$-th tree~$\tree{x}_i$ of the solution is equal to
	        \[
	        \cfrac{\min(\Gamma(\forest{b}, d_{max}) - \Gamma(P(\tree{x}_1 + \ldots + \tree{x}_{j-1}), d_{max}))}{\min(\Gamma(\forest{A}_i,d_{max})) \tree{x}_1^{i-1}}.
	        \]
     	Then, we can inductively construct $\tree{x} = \Gamma(\forest{X}, d)$ from $\Gamma(\forest{x}, d')$ where $d < d_{max}$ is the depth of a tree in $\forest{B}$ and $d'$ the smallest depth of a tree in $\forest{B}$ greater than~$d$~\textcircled{5}. In order to do so, we need to find~$\min(\Gamma(\forest{X}, d))$. Let $\tree{a} = \min(\Gamma(\forest{A}_j,d))$ for a~$j \in \{1, \ldots, m\}$ and~$\tree{b} = \min(\Gamma(\forest{B},d))$.
     	\begin{itemize}
	        \item If $\tree{b}$ has depth at least~$d'$ \textcircled{6}, or $\tree{a}$ has depth~$d$ and~$\tree{a} \times \min(\Gamma(\forest{X}, d'))^j \le \tree{b}$, then $\tree{x}$ has depth at least~$d'$ and thus is already in $\Gamma(\forest{x}, d')$.
	        \item Otherwise~$\tree{b}$ necessarily has depth~$d$. In that case, either~$\tree{a}$ has depth at least~$d'$, or~$\tree{a} \times \min(\Gamma(\forest{X}, d'))^j > \tree{b}$, and in both cases we have~$\tree{x} = \sqrt[j]{\tree{b} / \tree{a}}$.
     	\end{itemize}
		Then, similarly to the step for depth~$d_{max}$, we can construct a portion of the solution~$\forest{X}$ \textcircled{8}, or find a mistake \textcircled{7} and try another value of~$j$, if any. If during the iteration we find mistakes for all coefficients, this implies that the equation does not have a solution.
		
		Moreover, the number of iterations is bounded by the number of coefficients~$m$ times the depth~$d_{max}$, which are both polynomial with respect to the size of the inputs, and the operations carried out in each iteration can be performed in polynomial time \cite{article_arbre,kroot}.
	
	        We cannot directly apply this algorithm to polynomial equations over unrolls, since they contain infinite trees, but we can show that, in fact, we only need to consider a finite cut at depth polynomial with respect to the sizes of the FDDS in order to decide if a solution exists:
	
	% \begin{proposition}\label{prop:polyUnroll2PolyForestFini}
	% 	Let $P = \sum_{i=0}^{m} A_i X^i$ be polynomial over FDDS, $B$ an FDDS and $\alpha$ the number of unroll trees of $\unroll{B}$.
	% 	Let $n \ge 2 \times \alpha^2 + \depth{\unroll{B}}$ be an integer.
	% 	If there exists a forest $\forest{y}$ with $\depth{\forest{y}} \le n$ such that $\sum_{i=0}^{m}\cut{\unroll{A_i}}{n} \forest{y}^i = \cut{\unroll{B}}{n}$, then there exists a FDDS $X$ such that $\unroll{P(X)} = \unroll{B}$ and $\cut{\unroll{X}}{n} = \forest{y}$.
	% \end{proposition}

        \begin{figure}[p]
        \centering
        \includegraphics[page=3,width=\textwidth]{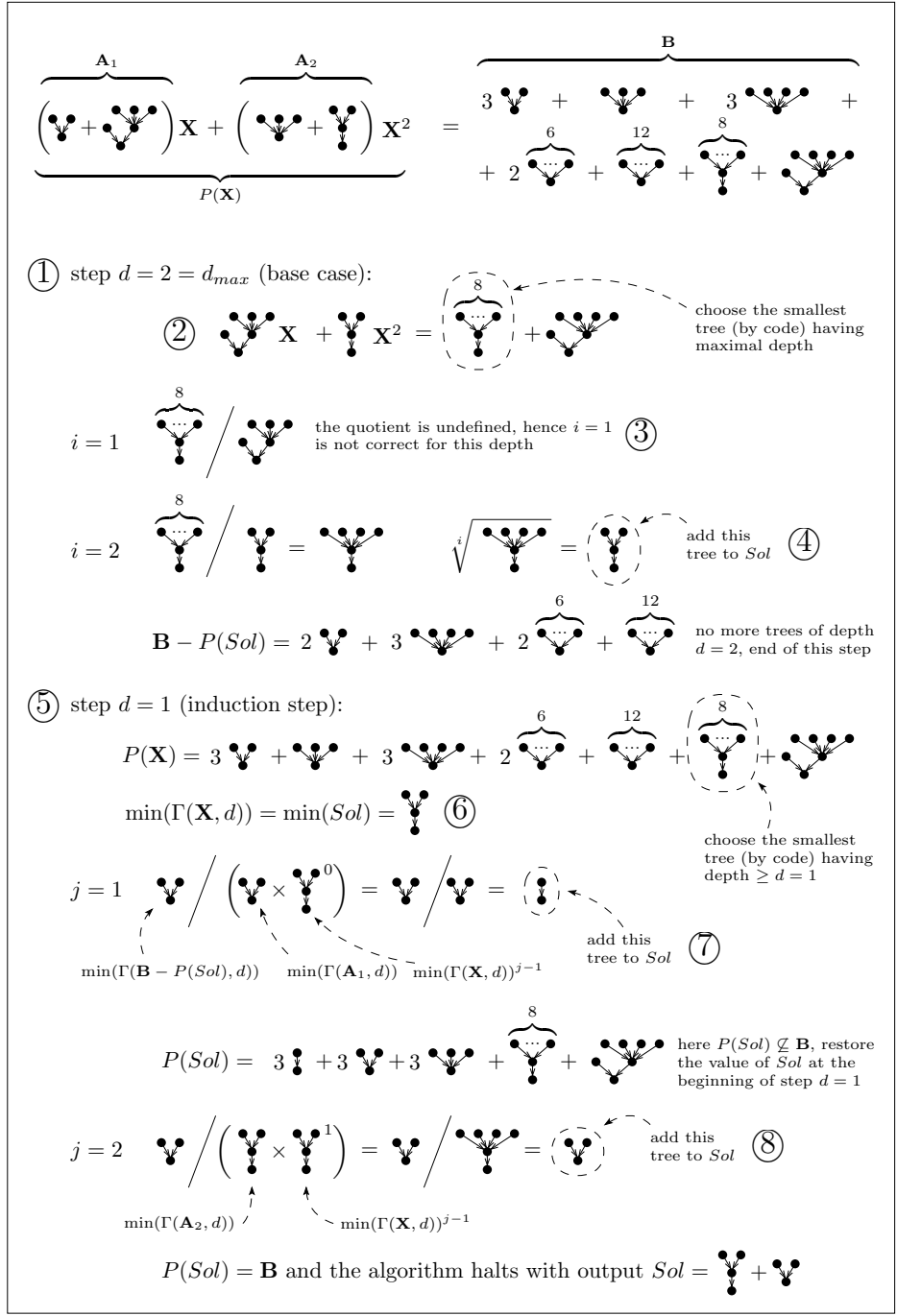}
        % \vspace{-1em}
        \caption{A run of the algorithm for polynomial equations over forests of finite trees.}
        \label{fig:algo}
        \end{figure}

	\begin{proposition}\label{prop:polyUnroll2PolyForestFini}
		Let $P = \sum_{i=0}^{m} A_i X^i$ be polynomial over FDDS, $B$ an FDDS and $\alpha$ the number of unroll trees of $\unroll{B}$.
		Let $n \ge 2 \times \alpha^2 + \depth{\unroll{B}}$ be an integer.
		Then, there exists a forest $\forest{y}$ with $\depth{\forest{y}} \le n$ such that $\sum_{i=0}^{m}\cut{\unroll{A_i}}{n}  \forest{y}^i = \cut{\unroll{B}}{n}$ if and only if there exists a FDDS $X$ such that $\unroll{P(X)} = \unroll{B}$. Furthermore, if such an $X$ exists, then necessarily $\cut{\unroll{X}}{n} = \forest{y}$.
	\end{proposition}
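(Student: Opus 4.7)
The plan is to handle the two directions of the equivalence separately; uniqueness of~$\forest{y}$ will follow at once from injectivity.

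For the easy direction ($\Leftarrow$), suppose $X$ is an FDDS with $\unroll{P(X)} = \unroll{B}$. Applying $\cut{\cdot}{n}$ to both sides and using that cut is a semiring homomorphism into the forests of depth at most~$n$ yields $\sum_{i=0}^m \cut{\unroll{A_i}}{n} \cut{\unroll{X}}{n}^i = \cut{\unroll{B}}{n}$, so $\forest{y} := \cut{\unroll{X}}{n}$ is a witness of the required depth. The uniqueness claim $\cut{\unroll{X}}{n} = \forest{y}$ is then immediate from Proposition~\ref{prop:injDesFinis}: both forests satisfy the same polynomial equation over the semiring of forests of depth at most~$n$, on which this polynomial is injective.

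For the hard direction ($\Rightarrow$), given $\forest{y}$, I would reconstruct an FDDS~$X$ with $\unroll{P(X)} = \unroll{B}$. The first observation is that $\alpha = |\unroll{B}|$ strongly bounds every candidate (assuming $P$ is non-constant in~$X$, the only nontrivial case): since the number of trees in $P(\unroll{X})$ equals $\sum_i |\unroll{A_i}|\,|\unroll{X}|^i = \alpha$, we get $|\unroll{X}| \le \alpha$, so every cycle length of~$X$, equivalently the period of the spine of every tree of $\unroll{X}$, is at most~$\alpha$. A short analysis of the levelwise product of unroll trees also yields $\depth{\unroll{X}} \le \depth{\unroll{B}}$, since the transient depth of a product of unroll trees is the maximum of the factors' transient depths.

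The main step is a periodicity and pigeonhole argument. Below depth $\depth{\unroll{B}} \ge \depth{\unroll{X}}$, each tree of any candidate $\unroll{X}$ is purely periodic (its spine together with the attached transient subtrees repeats with period at most~$\alpha$), and the equation forces the corresponding portion of each tree of~$\forest{y}$ to match such a periodic pattern. The slack $n - \depth{\unroll{B}} \ge 2\alpha^2$ then provides a window wide enough that, by pigeonhole over the $\mathcal{O}(\alpha^2)$ possible local patterns (bounded by pairs of period and phase, both at most~$\alpha$), a full pattern must repeat within the window, revealing the correct period and enabling a canonical infinite extension of each tree of~$\forest{y}$. The resulting infinite forest is, by construction, the unroll of a finite FDDS~$X$ obtained by folding the detected periodic tails into cycles.

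The main obstacle, beyond making the pigeonhole step precise, is verifying that $\unroll{P(X)} = \unroll{B}$ holds \emph{as unrolls} and not merely on the cuts at depth~$n$. Here I would argue that both $\unroll{P(X)}$ and $\unroll{B}$ are periodic unrolls whose periods divide a common bound related to~$\alpha$, and that by construction their cuts at depth~$n$ coincide; a standard periodicity-extension argument (agreement past the transient depth plus one full period forces agreement everywhere) then yields the global equality. Theorem~\ref{th:injPolyUnrolls} provides uniqueness of~$\unroll{X}$ and closes the argument.
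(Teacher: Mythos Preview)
Your proposal is correct and follows the same approach as the paper: the easy direction and the uniqueness claim are handled exactly as you do (via the homomorphism property of $\cut{\cdot}{n}$ and Proposition~\ref{prop:injDesFinis}), and for the hard direction the paper simply defers to \cite[Lemma~11 and Theorem~5]{kroot}, whose content is precisely the periodicity-detection and re-rolling argument you sketch. Your bounds ($|\unroll{X}|\le\alpha$, cycle lengths $\le\alpha$, $\depth{\unroll{X}}\le\depth{\unroll{B}}$) and the pigeonhole-then-extend outline match those used there; the paper itself provides no additional detail beyond the citation.
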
 
	
	\begin{proof}
		Assume that $\sum_{i=0}^{m}\cut{\unroll{A_i}}{n} \forest{y}^i = \cut{\unroll{B}}{n}$.
		First, remark that we can apply a similar reasoning to \cite[Lemma 11]{kroot} in order prove that if there exists $\forest{x}$ such that $\sum_{i=0}^{m} \unroll{A_i} \forest{X} = \unroll{B}$ then there exists an FDDS $Y$ such that $\unroll{Y} = \forest{X}$. 
		Since $\alpha$ bounds the greatest cycle length in $A_0,\ldots,A_m,B$ and since, from the hypothesis, $\depth{\unroll{B}}$ bounds $\depth{\unroll{A_i}}$ for each $i$, we have all the tools needed in order to prove (similarly to \cite[Theorem 5]{kroot}) that there exists a FDDS $X$ such that $\unroll{P(X)} = \unroll{B}$. The inverse implication is trivial.
        \end{proof}
                
        This proposition has two consequences. 
        First of all, the solution (if it exists) to an equation $\sum_{i=0}^{m} \unroll{A_i} \forest{X}^i = \unroll{B}$ is always the unroll of an FDDS. 
        Furthermore, by employing the same method for rebuilding $\unroll{X}$ from the solution~$\forest{Y}$ with~$\depth{\forest{Y}} \le n$ of $\sum_{i=0}^{m}\cut{\unroll{A_i}}{n} \forest{y}^i = \cut{\unroll{B}}{n}$ exploited in the proof of \cite[Theorem 6]{kroot}, we deduce an algorithm for solving equations over unrolls. In order to encode unrolls (which are infinite graphs), we represent them finitely by FDDS having those unrolls.

		\begin{theorem}\label{theorem:equation_forest_P}
			Let $P$ be a polynomial over unrolls (encoded as a polynomial over FDDS) and~$B$ an FDDS.
			Then, we can solve the equation~$P(\unroll{X}) = \unroll{B}$ in polynomial time.
		\end{theorem}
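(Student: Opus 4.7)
The plan is to combine Proposition~\ref{prop:polyUnroll2PolyForestFini} with the algorithm for polynomial equations over forests of finite trees sketched between Proposition~\ref{prop:injDesFinis} and Proposition~\ref{prop:polyUnroll2PolyForestFini} (illustrated in Figure~\ref{fig:algo}). Given the input polynomial encoded by the FDDS $A_0, \ldots, A_m$ and the FDDS~$B$, I first compute a sufficient cut depth $n = 2\alpha^2 + \depth{\unroll{B}}$ as prescribed by Proposition~\ref{prop:polyUnroll2PolyForestFini}, where $\alpha$ is the number of unroll trees of $\unroll{B}$. Both $\alpha$ (bounded by the number of periodic states of $B$) and $\depth{\unroll{B}}$ (bounded by $|B|$) are polynomial in the input size, so $n$ is polynomial as well.

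Next, I would compute the finite cuts $\cut{\unroll{A_i}}{n}$ and $\cut{\unroll{B}}{n}$ using the encodings of cuts of unrolls from \cite{article_arbre,kroot}, and feed the equation $\sum_{i=0}^{m}\cut{\unroll{A_i}}{n}\forest{Y}^i = \cut{\unroll{B}}{n}$ to the algorithm described before Proposition~\ref{prop:polyUnroll2PolyForestFini}. This algorithm performs at most $m \cdot n$ iterations, each of them a polynomial-time manipulation (minima, quotients and powers of trees, inclusion tests), so its total running time is polynomial. The algorithm either returns a forest $\forest{Y}$ with $\depth{\forest{Y}} \le n$ satisfying the finite equation, or reports that no such $\forest{Y}$ exists.

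If the algorithm reports failure, then by Proposition~\ref{prop:polyUnroll2PolyForestFini} no FDDS $X$ satisfies $\unroll{P(X)} = \unroll{B}$, so the equation has no solution. Otherwise, Proposition~\ref{prop:polyUnroll2PolyForestFini} guarantees the existence of an FDDS $X$ with $\unroll{P(X)} = \unroll{B}$ and $\cut{\unroll{X}}{n} = \forest{Y}$. Following the rebuilding technique of \cite[Theorem 6]{kroot}, I would reconstruct $\unroll{X}$ from~$\forest{Y}$ (exploiting the fact that, beyond depth~$n$, the unroll must be periodic with period dividing a cycle length of $B$) and then extract from it an explicit finite FDDS~$X$ having that unroll, again in polynomial time. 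Uniqueness of the reconstructed $\unroll{X}$, and hence of the answer, is guaranteed by Theorem~\ref{th:injPolyUnrolls}.

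The main obstacle is not conceptual but bookkeeping: one must check that each manipulation of cuts (computing $\cut{\unroll{A_i}}{n}$, taking minima, dividing trees, raising to the $i$-th power, comparing forests) can be carried out in polynomial time on the chosen representation, and that the rebuilding step from $\forest{Y}$ to an FDDS $X$ stays polynomial; fortunately, all these ingredients are exactly the ones already developed in \cite{article_arbre,kroot}, so the proof reduces to invoking them in the appropriate order.
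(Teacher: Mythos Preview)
Your proposal is correct and follows essentially the same approach as the paper: reduce to a finite-depth forest equation via Proposition~\ref{prop:polyUnroll2PolyForestFini}, solve it with the algorithm described after Proposition~\ref{prop:injDesFinis}, and rebuild the solution using the method of \cite[Theorem~6]{kroot}. The paper states Theorem~\ref{theorem:equation_forest_P} as an immediate consequence of that discussion, while you have simply fleshed out the polynomial-time bookkeeping in more detail.
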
 	
	
%%% Local Variables:
%%% mode: LaTeX
%%% TeX-master: "main"
%%% End:

	\section{Sufficient condition for the injectivity of polynomials}\label{section:cond_suf_poly_FDDS_inj}

	Even if all polynomials over unrolls are injective, we can easily remark that is not the case for the FDDS.
	For example, let $A$ be the cycle of length $2$; then, the polynomial $AX$ is not injective since $A^2=2A$.
	Hence, since the existence of injective polynomials is trivial (just consider the polynomial $X$), we want to obtain a characterization of this class of polynomials. 
	% In order to do this, we begin by considering polynomials $P = \sum_{i=1}^{m} A_i X^i$ without constant terms.
	
	Proposition 40 of \cite{article_arbre} gives a sufficient condition for the injectivity of polynomials without constant term, namely if the coefficient~$A_1$ is cancelable, \ie, at least one of its connected component is a dendron.
	However, since~$X^k$ is injective for all~$k>0$ (by uniqueness of $k$-th roots~\cite{article_arbre}), this condition is not necessary. 
	Hence, we start by giving a sufficient condition which covers these cases.
	
	\begin{proposition}\label{prop:condSufInj}
		Let $P = \sum_{i=0}^{m} A_i X^{i}$ be a polynomial.
		If at least one coefficient of $P$ is cancelable then $P$ is injective.
	\end{proposition}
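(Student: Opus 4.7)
My plan splits the argument into two stages: handle the transient part via the unroll homomorphism, then handle the periodic part via the cancelable coefficient. Assume $P(X) = P(Y)$. Applying the unroll, which is a semiring homomorphism from $\mathbb{D}$ to unrolls, I obtain $\sum_{i=0}^m \unroll{A_i} \unroll{X}^i = \sum_{i=0}^m \unroll{A_i} \unroll{Y}^i$. By Theorem~\ref{th:injPolyUnrolls}, univariate polynomials over unrolls are injective, so $\unroll{X} = \unroll{Y}$. This already determines $X$ and $Y$ up to the way their periodic states are distributed into cycles: the multiset of unroll trees and the forests hanging off each periodic state are fully encoded in the unroll.

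The second stage is to rule out the residual freedom using the cancelable coefficient. Let $D$ be a dendron component of a cancelable $A_j$ (with $j\ge 1$). Because $D$ has a cycle of length one, direct multiplication by $D$ preserves cycle structure while decorating each periodic state with a copy of the transient tree of $D$. My plan is to isolate the contribution $D\cdot X^j$ inside $P(X)$: since $\unroll{X}=\unroll{Y}$ already fixes the tree decorations, any remaining discrepancy between $P(X)$ and $P(Y)$ must live in the cyclic skeleton, and the dendron's contribution can be separated by an extremal-minimality argument analogous to Lemma~\ref{lemme:ordreProdFin} but phrased on cycles rather than on finite forests. Once $D\cdot X^j = D\cdot Y^j$ is established, cancelability of $D$ gives $X^j = Y^j$, and uniqueness of $j$-th roots (recalled just before the statement) yields $X = Y$.

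The hard part will be the second stage, because $A_j$ is only one summand of $P$ and so cannot be cancelled naively from $P(X)=P(Y)$. Isolating the dendron's contribution requires combining the unroll equality with the cancelative structure, presumably through a level-by-level inductive reconstruction in the spirit of the algorithm sketched in Section~\ref{section:poly_unroll}. A robust alternative, consistent with the paper's announced algorithmic flavour, is to upgrade the proof into a constructive procedure: use Theorem~\ref{theorem:equation_forest_P} to recover $\unroll{X}$ uniquely, then rebuild the cycles of $X$ one at a time, using $A_j$ to validate each candidate cycle and rule out all others, so that uniqueness at every step yields injectivity of $P$ (and along the way a polynomial-time solver for $P(X)=B$, as the proposition's discussion suggests).
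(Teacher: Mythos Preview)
Your first stage is correct: from $P(X)=P(Y)$ and the fact that the unroll is a semiring homomorphism, Theorem~\ref{th:injPolyUnrolls} indeed gives $\unroll{X}=\unroll{Y}$. The gap is entirely in the second stage. You propose to ``isolate the contribution $D\cdot X^j$ inside $P(X)$'' and deduce $D\cdot X^j = D\cdot Y^j$, but nothing in your outline explains how to separate this single monomial's contribution from the others. Knowing $\unroll{X}=\unroll{Y}$ tells you nothing about cycle lengths (e.g., $\unroll{C_2}=\unroll{2C_1}$), and knowing $P(X)=P(Y)$ gives you only a sum in which the terms $A_i X^i$ are thoroughly mixed at the level of connected components. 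There is no extremal argument on ``the cyclic skeleton'' that picks out the degree-$j$ term: components produced by $A_iX^i$ and $A_{i'}X^{i'}$ can have identical cycle lengths, and your global unroll equality does not help disentangle them. So the step ``once $D\cdot X^j = D\cdot Y^j$ is established'' is precisely the step that is missing, not a routine cleanup.

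The paper's proof does \emph{not} separate the argument into a global unroll phase followed by a global cycle phase. It interleaves them, reconstructing $X$ one connected component at a time in an order $\le_c$ that sorts primarily by cycle length. Two technical devices do the work you are missing. First, the map $\setDive{\cdot}{p}$ that keeps only the components whose cycle length divides $p$ is a semiring endomorphism (Lemma~\ref{lemme:closureProduct}); applying it to $P(X)=P(Y)$ yields an honest polynomial equation $\sum_i \setDive{A_i}{p}\,\setDive{X}{p}^i = \sum_i \setDive{A_i}{p}\,\setDive{Y}{p}^i$, to which Theorem~\ref{th:injPolyUnrolls} can be applied \emph{locally}. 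Second, and this is where cancelability is actually used, the dendron in some $A_c$ guarantees that the minimum cycle length appearing in $P(X)-P(X_1+\cdots+X_{n-1})$ equals $\ell(X_n)$ (Lemma~\ref{lemme:structurePolyFDDS}); without a fixed-point component, that minimum could be a strict multiple. So the cancelable coefficient is never cancelled from an isolated monomial equality as you envision; it serves only to read off the next cycle length, after which the unroll argument on $\setDive{\cdot}{p}$ pins down the next component. Your fallback plan in the last paragraph gestures toward this component-by-component reconstruction, but the two concrete ingredients above---the $\setDive{\cdot}{p}$ endomorphism and the cycle-length lemma---are what make it go through.
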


	For the proof of this proposition, we start by showing the case of polynomials $P = \sum_{i=1}^{m} A_i X^i$ without a constant term. 
	For that, we need to analyze the structure of the possible evaluations of $P$, and more precisely their multisets of cycles.
	We denote by $\cycle(A)$ the length of the cycle of a connected FDDS $A$.
	% For this analyze, we need two tools. 
	Let us introduce two total orders over connected FDDS:
	
	\begin{definition}
		% Let $a,b$ be two positive integer, $\tree{t},\tree{t'}$ be two unroll trees and
	        Let $A,B$ be two connected FDDS, let~$a = \cycle(A)$, $b = \cycle(B)$, $\tree{t}= \min(\unroll{A})$, and $\tree{t'} = \min(\unroll{B})$.
		We define:	
		\begin{itemize}
			\item $A \oct B$ if and only if $a > b$ or $a = b$ and $\tree{t} \ge \tree{t'}$.
			\item $A \otc B$ if and only if $\tree{t} > \tree{t'}$ or $\tree{t} = \tree{t'}$ and $a \ge b$.
		\end{itemize}
	\end{definition}
	
	% The second tool for our analysis is two functions of connected component.
	For an FDDS $X$ and a positive integer~$p$, let us consider the function~$\setSize{X}{p}$, which computes the multiset of connected components of $X$ with cycles of length $p$, and the function~$\setDive{X}{p}$, computing the multiset of connected components of $X$ with cycles of length \emph{dividing} $p$.
	
	\begin{lemma}\label{lemme:closureProduct}
		Let $p > 0$ be an integer; then
		$\setDive{\cdot}{p}$ is an endomorphism over FDDS. 
	\end{lemma}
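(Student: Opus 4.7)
The plan is to verify that $\setDive{\cdot}{p}$ preserves the two semiring operations (and, for completeness, the multiplicative identity). The additive part should be essentially by definition: the connected components of~$X+Y$ are the disjoint union of those of~$X$ and of~$Y$, and the filter ``cycle length divides~$p$'' is applied componentwise, so $\setDive{X+Y}{p} = \setDive{X}{p} + \setDive{Y}{p}$ is immediate. The multiplicative unit is the fixed point, whose cycle has length~$1$, which divides every~$p$, so it is preserved.

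The real content is multiplicativity. I would write $X = \sum_i X_i$ and $Y = \sum_j Y_j$ as disjoint unions of connected components with cycle lengths $a_i = \cycle(X_i)$ and $b_j = \cycle(Y_j)$, then use distributivity to get $XY = \sum_{i,j} X_i Y_j$. By the classical result on direct products of functional digraphs recalled in Section~\ref{sec:preliminaries}, each $X_i Y_j$ splits into $\gcd(a_i,b_j)$ connected components, all of cycle length $\lcm(a_i,b_j)$.

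The key number-theoretic fact is $\lcm(a,b) \mid p$ if and only if $a \mid p$ and $b \mid p$. Applying it shows that the connected components arising from $X_i Y_j$ belong to $\setDive{XY}{p}$ precisely when both $a_i \mid p$ and $b_j \mid p$. Factoring this conjunction yields
\[
\setDive{XY}{p} \;=\; \sum_{\substack{i : a_i \mid p \\ j : b_j \mid p}} X_i Y_j \;=\; \Bigl(\sum_{a_i \mid p} X_i\Bigr)\Bigl(\sum_{b_j \mid p} Y_j\Bigr) \;=\; \setDive{X}{p} \cdot \setDive{Y}{p}.
\]

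I do not expect a real obstacle beyond bookkeeping. The only subtlety is that these equalities must be read as identities of multisets of isomorphism classes of connected FDDS rather than of labelled objects, but this is already the convention in force throughout the paper and the description of $X_i Y_j$ given above is intrinsic to its isomorphism type, so nothing additional needs to be checked.
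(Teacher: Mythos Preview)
Your proof is correct and follows essentially the same approach as the paper: both arguments handle additivity and the identity trivially, reduce multiplicativity via distributivity to products of connected components, and then use the structural fact that $X_i Y_j$ consists of $\gcd(a_i,b_j)$ components of cycle length $\lcm(a_i,b_j)$ together with the equivalence $\lcm(a,b)\mid p \iff a\mid p \text{ and } b\mid p$. The paper phrases the multiplicative step as a pair of inclusions for connected $A,B$, whereas you use the biconditional directly on the expanded sum, but this is only a cosmetic difference.
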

	
	\begin{proof}
		First, if we denote by $\mathbf{1}$ the fixed point (the multiplicative identity in~$\mathbb{D}$), we obviously have $\setDive{\mathbf{1}}{p} = \mathbf{1}$. 
		Now, let $A,B$ be two FDDS. 
		It is clear that $\setDive{A + B}{p} = \setDive{A}{p} + \setDive{B}{p}$ since the sum is the disjoint union.
		All that remains to prove is that $\setDive{AB}{p} = \setDive{A}{p} \setDive{B}{p}$.
		By the distributivity of the sum over the product, it is sufficient to show the property in the case of connected $A$ and $B$. 
		Recall that, by the definition of product of FDDS, the product of two connected FDDS $U,V$ has $\gcd(\cycle(U),\cycle(V))$ connected component with cycle length $\lcm(\cycle(U), \cycle(V))$.  
		We deduce that all elements in $\setDive{AB}{p}$ are generated by the product of elements in $\setDive{A}{p}$ and $\setDive{B}{p}$.
		Hence $\setDive{AB}{p} \subseteq \setDive{A}{p}\setDive{B}{p}$.
		Furthermore, by the definition of $\setDive{A}{p}$, each element of $\setDive{A}{p}$ has a cycle length dividing $p$.
		Thus, the cycle length of the product of elements of $\setDive{A}{p}$ and $\setDive{B}{p}$ is bounded by $p$.
		Since the $\lcm$ of integers dividing $p$ is also a divisor of $p$, we have $\setDive{A}{p} \setDive{B}{p} \subseteq \setDive{AB}{p}$.
	\end{proof}

	Our goal is to prove, by a simple induction, the core of the proof of Proposition~\ref{prop:condSufInj}. % , namely Lemma~\ref{lemme:condSufInjSpe}. 
	Lemma~\ref{lemme:closureProduct} is a first step to show the induction case.
	Indeed, it allows us to treat sequentially the different cycle lengths. We use the order~$\ioct$ to process sequentially connected components having the same cycle length, as shown by the following lemma.
	
	\begin{lemma}\label{lemme:structurePolyFDDS}
		Let $X = X_1 + \cdots + X_k$ be an FDDS with $k$ connected components sorted by $\ioct$ and $P = \sum_{i=1}^{m} A_i X^i$ a polynomial over FDDS without constant term and with at least one cancelable coefficient. 
		Then $B = \min_{\oct}(P(X) - P(\sum_{i=1}^{n-1} X_i))$ implies $\cycle(X_n) = \cycle(B)$ for all~$1 \le n \le k$.
	\end{lemma}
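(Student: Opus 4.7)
The plan is to establish two complementary bounds on the cycle lengths appearing in $R = P(X) - P(X_1 + \cdots + X_{n-1})$: (i) every connected component of $R$ has cycle length at least $\cycle(X_n)$, and (ii) at least one connected component of $R$ has cycle length exactly $\cycle(X_n)$. Since $\oct$ ranks by cycle length first, $\min_{\oct}$ selects a component with the smallest cycle, so the two bounds together force $\cycle(B) = \cycle(X_n)$.

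For (i), I would write $X = Y + Z$ with $Y = X_1 + \cdots + X_{n-1}$ and $Z = X_n + \cdots + X_k$, and expand each $(Y+Z)^i$ by distributivity. Then $R$ is the sum of all monomials $A_i \cdot X_{j_1} \cdots X_{j_i}$ having at least one index~$j_l \ge n$. The sorting by~$\ioct$ gives $\cycle(X_{j_l}) \ge \cycle(X_n)$ for any such~$j_l$, and the cycle of any connected component of such a monomial is the lcm of the cycles of its factors (the relevant component of $A_i$ together with $X_{j_1}, \ldots, X_{j_i}$), hence at least $\cycle(X_n)$.

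For (ii), I would use the hypothesis that some coefficient $A_s$ is cancelable: by \cite[Theorem 30]{article_arbre}, $A_s$ admits a dendron component~$d$ with $\cycle(d) = 1$. Since the polynomial has no constant term, $s \ge 1$, so the monomial $d \cdot X_n^s$ sits inside $A_s X^s$ and involves $X_n$, and thus contributes to~$R$. An easy induction on~$s$ using the formula for the product of connected FDDS shows that every connected component of $X_n^s$ has cycle exactly $\cycle(X_n)$; multiplying by the dendron~$d$ preserves this, since $\lcm(1, \cycle(X_n)) = \cycle(X_n)$. Thus~$R$ contains at least one component of cycle exactly $\cycle(X_n)$.

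The only place where cancelability is genuinely needed is in (ii): without a coefficient containing a cycle of length~$1$, the mandatory factors coming from the~$A_i$ could inflate every lcm strictly above $\cycle(X_n)$ and prevent the lower bound of (i) from being attained. The bound (i) itself is a routine consequence of the $\ioct$ sorting and Lemma~\ref{lemme:closureProduct}-style reasoning on cycle lengths under direct products.
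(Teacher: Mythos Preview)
Your proposal is correct and follows essentially the same approach as the paper's proof: both establish the lower bound $\cycle(B)\ge\cycle(X_n)$ from the fact that every term of $R$ carries some factor $X_{j_l}$ with $j_l\ge n$ (hence the lcm governing its cycle length dominates $\cycle(X_n)$), and the matching upper bound from the dendron component of a cancelable $A_s$ applied to $X_n^s$. Your write-up is simply more explicit than the paper's rather terse version, in particular spelling out the $(Y+Z)^i$ expansion and the induction showing $\cycle(X_n^s)=\cycle(X_n)$.
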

	
	\begin{proof}
	        We have $\cycle(X_n) \le \lcm(a, \cycle(X_{i}))$ for all integers $i$ with $n \le i \le k$ and $a > 0$. 
		It follows that $\cycle(X_n) \le \cycle(B)$. 
		In addition, since at least one coefficient of $P$, say $A_{c}$, is cancelable, it contains one connected component whose cycle length is one. 
		This implies that $A_{c} X_{n}^c$ contains a connected component $C$ such that $\cycle(C) = \cycle(X_n)$. 
		If $B$ is minimal, then $\cycle(X_n) \ge \cycle(B)$.
	\end{proof}
	
	We can now identify recursively and unambiguously each connected component in $X$. 
	Therefore, we can use Lemma~\ref{lemme:structurePolyFDDS}, as part of the induction step for the following lemma.
	
	\begin{lemma}\label{lemme:condSufInjSpe}
		Let $P = \sum_{i=1}^{m} A_i X^{i}$ be a polynomial without constant term.
		If at least one coefficient of $P$ is cancelable then $P$ is injective.
	\end{lemma}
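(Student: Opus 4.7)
The plan is to prove, by strong induction on~$q$ ordered by divisibility, that $\setSize{X}{q} = \setSize{Y}{q}$ as multisets of connected FDDS for every positive integer~$q$. Since $X = \sum_q \setSize{X}{q}$ and $Y = \sum_q \setSize{Y}{q}$, this immediately yields $X = Y$. The cancelable coefficient~$A_c$ of~$P$ contains at least one dendron, and since its cycle length~$1$ divides every~$q$, the reduced polynomials used at each inductive step will always retain a nonzero coefficient, allowing the machinery to apply uniformly.

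The key reduction at each~$q$ is to apply the operator~$\setDive{\cdot}{q}$ to both sides of $P(X) = P(Y)$. By Lemma~\ref{lemme:closureProduct} this is an endomorphism, so it produces $P_q(\setDive{X}{q}) = P_q(\setDive{Y}{q})$, where $P_q = \sum_i \setDive{A_i}{q} X^i$ still has a nonzero (indeed cancelable) coefficient thanks to the dendron in $\setDive{A_c}{q}$. Passing through the unroll homomorphism turns this into a polynomial equation over unrolls with the same nonzero polynomial on each side, so Theorem~\ref{th:injPolyUnrolls} yields $\unroll{\setDive{X}{q}} = \unroll{\setDive{Y}{q}}$. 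Writing $\setDive{X}{q} = \sum_{q' \mid q} \setSize{X}{q'}$ (and similarly for~$Y$) and using the inductive hypothesis at each proper divisor~$q' < q$, the cancellation of common contributions gives $\unroll{\setSize{X}{q}} = \unroll{\setSize{Y}{q}}$.

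The final step is to deduce $\setSize{X}{q} = \setSize{Y}{q}$ from this equality of unrolls. This relies on the observation that an FDDS whose connected components all have cycle length exactly~$q$ is uniquely recoverable from its unroll: given any one unroll tree of a cycle-$q$ component~$C$, the~$q$ periodic states of~$C$ can be read along the first~$q$ levels of its infinite branch and the transient structure from the non-infinite-branch subtrees. Grouping the unroll trees by the component they determine therefore identifies the multisets~$\setSize{X}{q}$ and~$\setSize{Y}{q}$, closing the induction.

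The main obstacle lies in this last recovery step. Although conceptually straightforward from the definition of unroll, it needs care at the multiset level: a component with a nontrivial rotational symmetry produces isomorphic unroll trees, and the grouping map from unroll trees to components must be verified to preserve multiplicities correctly. Once this injectivity of~$\unroll{\cdot}$ on multisets of connected FDDS of a prescribed cycle length is established, the induction closes and Lemma~\ref{lemme:condSufInjSpe} follows.
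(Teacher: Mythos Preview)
Your argument is correct and uses the same two core ingredients as the paper---the endomorphism property of $\setDive{\cdot}{q}$ (Lemma~\ref{lemme:closureProduct}) and the injectivity of polynomials over unrolls (Theorem~\ref{th:injPolyUnrolls})---but organises the induction differently. The paper sorts the connected components of $X$ and $Y$ by the order~$\ioct$ and peels them off one at a time: at each step it invokes Lemma~\ref{lemme:structurePolyFDDS} to pin down the cycle length~$p$ of the next component, applies $\setDive{\cdot}{p}$ and Theorem~\ref{th:injPolyUnrolls} to obtain $\unroll{\setDive{X}{p}} = \unroll{\setDive{Y}{p}}$, and then uses that a connected FDDS is determined by its cycle length together with its minimal unroll tree. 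You instead induct on~$q$ along the divisibility order and recover the whole multiset $\setSize{X}{q}$ at once from $\unroll{\setSize{X}{q}}$. Your route is more streamlined---it dispenses with the auxiliary order~$\ioct$ and with Lemma~\ref{lemme:structurePolyFDDS} entirely---at the price of needing the reconstruction fact at the multiset level rather than for a single component; your sketch of that step is fine (each unroll tree determines its component once the period~$q$ is fixed, and each component contributes exactly~$q$ trees, so multiplicities are recovered by dividing by~$q$). Conversely, the paper's component-by-component extraction is what makes the subsequent polynomial-time algorithm transparent, since it directly yields the next piece of the solution at each iteration.
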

	
	\begin{proof}
	        Let $X = X_1 + \cdots + X_{n_1}$, resp., $Y = Y_1 + \cdots + Y_{n_2}$ be FDDS consisting of~$n_1$ (resp., $n_2$) connected components sorted by~$\ioct$, and let~$B$ be an FDDS. Suppose~$P(X) = B = P(Y)$.
		We prove by induction on the number~$n_1$ of connected components in $X$ that $X = Y$.
		
		If $X$ has $0$ connected components, since $P$ does not have constant term, we have $P(X) = 0$.
		In addition, since $P$ has at least a cancelable coefficient, it trivially contains at least one nonzero coefficient.
		Thus, since $P(Y) = P(X) = 0$, we deduce that $Y = 0$, hence $X = Y$.
		The property is true in the base case.
		
		Let $n$ be a positive integer.
		Suppose that the properties is true for $n$, \ie $X_1 + \cdots + X_n = Y_1 + \cdots + Y_n$.
		Then, $P(X_1 + \cdots + X_n) = P( Y_1 + \cdots + Y_n)$.
		This implies that $P(X) - P(X_1 + \cdots + X_n) = B - P(X_1 + \cdots + X_n) = P(Y) -  P(Y_1 + \cdots + Y_n)$.
		Let $p$ be the length of the cycle of the smallest connected component of $B - P(X_1 + \cdots + X_n)$ according to $\oct$.
		By Lemma \ref{lemme:structurePolyFDDS}, 
		we have $\cycle(X_{n+1}) = \cycle(Y_{n+1}) = p$.
		
		Besides, thanks to Lemma \ref{lemme:closureProduct}, we deduce that $\setDive{P(X)}{p} = \sum_{i=1}^{m} \setDive{A_i}{p} \setDive{X^i}{p}$ and $\setDive{P(Y)}{p} = \sum_{i=1}^{m} \setDive{A_i}{p} \setDive{Y^i}{p}$.
		Since $\setDive{P(X)}{p} = \setDive{B}{p} = \setDive{P(Y)}{p}$, we deduce that $ \sum_{i=1}^{m} \setDive{A_i}{p} \setDive{X^i}{p} =  \sum_{i=1}^{m} \setDive{A_i}{p} \setDive{Y^i}{p}$.
		By taking unrolls, we obtain $\sum_{i=1}^{m} \unroll{\setDive{A_i}{p}} \unroll{\setDive{X^i}{p}} =  \sum_{i=1}^{m} \setDive{A_i}{p} \setDive{Y^i}{p}$.
		However, by injectivity of polynomials over unrolls (Theorem \ref{th:injPolyUnrolls}), we have $\unroll{\setDive{X}{p}} = \unroll{\setDive{Y}{p}}$.
		Thus, the smallest tree of $\unroll{\setDive{X}{p}} - \unroll{\setDive{X_1 + \cdots + X_n}{p}}$ is equal to the smallest of $\unroll{\setDive{Y}{p}} - \unroll{\setDive{Y_1 + \cdots + Y_n}{p}}$.
		However, the smallest tree of $\unroll{\setDive{X}{p}} - \unroll{\setDive{X_1 + \cdots + X_n}{p}}$ is the smallest tree of $\unroll{X_{n+1}}$ and of $\unroll{Y_{n+1}}$.
		This implies that $X_{n+1}$ and $Y_{n+1}$ are two connected components whose cycle length is $p$ and have the same minimal unroll tree, so $X_{n+1} = Y_{n+1}$.
		The induction step and the statement follow.
	\end{proof}
	
	What remains in order to conclude the proof of Proposition \ref{prop:condSufInj} is to extend Lemma \ref{lemme:condSufInjSpe} to polynomials with a constant term.
	
	\begin{proof}[Proof of Proposition \ref{prop:condSufInj}]
		Let $P = \sum_{i=0}^{m} A_i X^{i}$ be a polynomial with at least one non-constant cancelable coefficient.
		Let $P' = \sum_{i=1}^{m} A_i X^{i}$ be the polynomial obtained from $P$ by removing the constant term $A_0$.
		Hence $P = P' + A_0$.
		If there exist two FDDS $X,Y$ such that $P(X) = P(Y)$, then $P'(X) + A_0 = P'(Y) + A_0$.
		This implies that $P'(X) = P'(Y)$.
		Thus, by Lemma \ref{lemme:condSufInjSpe}, we conclude that $X = Y$. 
	\end{proof}
	
	From the proof of Lemma~\ref{lemme:structurePolyFDDS} and Theorem~\ref{theorem:equation_forest_P}, we deduce that we can solve in polynomial time all equations of the form~$P(X) = B$ with~$P$ having a cancelable non-constant coefficient. 
	Indeed, we can just select the connected components with the correct cycle lengths, and cut their unrolls to the correct depth. 
	Then, we solve the equation over the forest thus obtained, select the smallest tree of the result, and re-roll it to the correct period. 
	Finally, we remove the corresponding multiset of connected components and reiterate until either all connected components have been removed, or we find a connected component that we cannot remove (which implies that the equation has no solution). 
	
	%%% Local Variables:
	%%% mode: LaTeX
	%%% TeX-master: "main"
	%%% End:

	\section{Necessary condition for the injectivity of polynomials}\label{section:cond_nec_poly_FDDS_inj}
	
	We will now show that the sufficient condition for injectivity of Proposition \ref{prop:condSufInj} is actually necessary.
	As previously, we start by considering polynomials without constant terms.
	Our starting point for this proof is given by  \cite[Theorem 34]{article_arbre}, which characterizes the injectivity of linear monomials.
	Furthermore \cite[Lemma 33]{article_arbre} proves that this condition is necessary for linear monomials.
	We begin by extending the proof of \cite[Lemma 33]{article_arbre} to show that this is necessary for \emph{all} monomials. 
	
	For this, let $\delta_J$ be the sequence recursively defined by $\delta_\emptyset = 1$ and $\delta_{J \cup \{a\}} = \gcd(a, \lcm(J)) \delta_J$ for all $a \in \N$. 
	Now let~$\mathcal{A}$ be a set of integers strictly larger than $1$ and for $I \subseteq \mathcal{A}$ we define $\alpha_I = \delta_\mathcal{A}\prod_{a \in \mathcal{A}} a$ and $\beta_I = \alpha_I + (-1)^{|I|}\delta_I \prod_{a \in \mathcal{A} - I} a$.
	
	We will use $\alpha_I$ and $\beta_I$ to construct two different FDDS $X$ and $Y$ such that $AX = AY$ if $A$ is not cancelable.
	We will start with the case where $A$ is a cycle, denoted by $C_b$ where $b = \cycle(C_b)$.
	% We First set a technical lemma.
	
	\begin{lemma}\label{lemme:cycleNonInjTech}
		% Let $\mathcal{A}$ be a set of integers strictly larger than $1$ and
                Let $k \ge 1$ be an integer, $b \in \mathcal{A}$, $I \subseteq \mathcal{A} - \{b\}$ and $J = I \cup \{b\}$. Then
                \begin{equation}
                \label{eq:kToKMinus1}
                C_b (\beta_I C_{\lcm(I)} + \beta_J C_{\lcm(J)})^k = C_b (\alpha_I C_{\lcm(I)} + \alpha_J C_{\lcm(J)})^k.
                \end{equation}
	\end{lemma}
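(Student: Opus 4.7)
The plan is to expand both sides of \eqref{eq:kToKMinus1} using the cycle product formula and reduce the identity to a scalar computation. Write $U = C_{\lcm(I)}$, $V = C_{\lcm(J)}$ and $g = \gcd(b,\lcm(I))$, so that by definition of $\delta$ we have $\delta_J = g\,\delta_I$ and $\lcm(J) = b\,\lcm(I)/g$; in particular $g\,\lcm(J) = b\,\lcm(I)$. The product of two connected cycles of lengths $p$ and $q$ is $\gcd(p,q)$ copies of~$C_{\lcm(p,q)}$, which immediately yields $U^2 = \lcm(I)\,U$, $V^2 = \lcm(J)\,V$, $UV = \lcm(I)\,V$, $C_b\,U = g\,V$, and (using $b \mid \lcm(J)$) $C_b\,V = b\,V$.

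With these rules, every monomial $C_b\cdot U^i V^{k-i}$ appearing in a multinomial expansion collapses to an integer multiple of $V$. A direct bookkeeping calculation shows that for any non-negative integer coefficients~$\beta$ and~$\gamma$,
\[
C_b\,(\beta U + \gamma V)^k \;=\; \frac{b}{\lcm(J)}\bigl(\beta\,\lcm(I) + \gamma\,\lcm(J)\bigr)^k\,V,
\]
where the scalar in front is a positive integer (interpreted in $\mathbb{D}$ as a disjoint union of that many copies of~$V$). Applying this formula with $(\beta,\gamma) = (\beta_I,\beta_J)$ on the left-hand side of~\eqref{eq:kToKMinus1} and with $(\beta,\gamma) = (\alpha_I,\alpha_J)$ on the right-hand side reduces the lemma to the single scalar identity
\[
\beta_I\,\lcm(I) + \beta_J\,\lcm(J) \;=\; \alpha_I\,\lcm(I) + \alpha_J\,\lcm(J).
\]

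Since the definition of $\alpha$ does not depend on the subscript, $\alpha_I = \alpha_J$, and the identity is equivalent to $(\alpha_I-\beta_I)\lcm(I) + (\alpha_J-\beta_J)\lcm(J) = 0$. Substituting $\alpha_I - \beta_I = -(-1)^{|I|}\delta_I\!\prod_{a\in\mathcal{A}-I}a$, using $|J| = |I|+1$ to flip the sign, and then $\mathcal{A}-I = (\mathcal{A}-J)\cup\{b\}$ together with $\delta_J = g\delta_I$, the expression factors as $(-1)^{|I|}\delta_I\!\prod_{a\in\mathcal{A}-J}a\cdot\bigl(g\,\lcm(J) - b\,\lcm(I)\bigr)$, which vanishes by the identity $g\,\lcm(J) = b\,\lcm(I)$.

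The only real obstacle is keeping the multinomial expansion honest; the trick that makes the proof short is the observation that once $C_b$ is applied, everything flattens into scalar multiples of~$V$, turning the whole power into a pure binomial in $\beta\,\lcm(I)$ and $\gamma\,\lcm(J)$ times a single factor of~$V$. After that, the scalar identity is just a one-line cancellation built directly from the recursive definition of $\delta_J$.
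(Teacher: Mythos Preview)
Your proof is correct but takes a genuinely different route from the paper's. The paper's argument is a two-line induction: it invokes \cite[Lemma~33]{article_arbre} for the base case $k=1$, i.e., $C_b(\beta_I U+\beta_J V)=C_b(\alpha_I U+\alpha_J V)$, and then peels off one factor at a time via commutativity, writing $C_b\cdot Y^k = (C_bY)Y^{k-1} = (C_bX)Y^{k-1} = X\cdot C_b\cdot Y^{k-1}$ and iterating. Your approach instead flattens both sides completely: you observe that every monomial $C_b U^i V^{k-i}$ is a scalar multiple of $V$, derive the closed form $C_b(\beta U+\gamma V)^k=\frac{b}{\lcm(J)}(\beta\lcm(I)+\gamma\lcm(J))^k\,V$, and then verify the scalar identity $\beta_I\lcm(I)+\beta_J\lcm(J)=\alpha_I\lcm(I)+\alpha_J\lcm(J)$ directly from the definitions of $\alpha,\beta,\delta$.

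The trade-off is self-containment versus brevity. The paper's proof is shorter but outsources the arithmetic to an external lemma; your proof is longer but re-derives the $k=1$ identity from scratch (your final scalar cancellation is essentially the content of \cite[Lemma~33]{article_arbre}) and makes explicit the structural reason the equality holds, namely that multiplication by $C_b$ collapses the whole subsemiring generated by $U$ and $V$ onto $\mathbb{N}\cdot V$. Both arguments are valid; yours would be preferable in a setting where one does not want to cite the earlier paper.
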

	
	\begin{proof}
		% We have
		% \begin{equation}
		% 	C_b (\beta_I C_{\lcm(I)} + \beta_J C_{\lcm(J)})^k = C_b (\beta_I C_{\lcm(I)} + \beta_J C_{\lcm(J)}) (\beta_I C_{\lcm(I)} + \beta_J C_{\lcm(J)})^{k-1}.
		% \end{equation}
		The proof of \cite[Lemma 33]{article_arbre} shows that $C_b (\beta_I C_{\lcm(I)} + \beta_J C_{\lcm(J)}) = C_b (\alpha_I C_{\lcm(I)} + \alpha_J C_{\lcm(J)})$.
		Thus, we can replace one instance $C_b (\beta_I C_{\lcm(I)} + \beta_J C_{\lcm(J)})$ in \eqref{eq:kToKMinus1} % by $C_b (\alpha_I C_{\lcm(I)} + \alpha_J C_{\lcm(J)})$
                and obtain:
		\[C_b (\beta_I C_{\lcm(I)} + \beta_J C_{\lcm(J)})^k =(\alpha_I C_{\lcm(I)} + \alpha_J C_{\lcm(J)}) C_b (\beta_I C_{\lcm(I)} + \beta_J C_{\lcm(J)})^{k-1}.\]
		By repeating the same substitution, the thesis follows.
	\end{proof}
	
	Now we can prove that monomials of the form $A X^k$, with $k \ge 1$ and where $A$ is a sum of cycles (\ie, a
        \emph{permutation}) with no cycle of length~$1$, are never injective.
	
	\begin{lemma}\label{lemme:noInjMonomeP1}
		% Let $\mathcal{A}$ be a set of integer strictly superior to $1$ and
                Let~$A = A_1 + \cdots + A_m$ be a sum of cycles of length greater than~$1$ and $k \ge 1$ be an integer.
		Then, there exist two different $FDDS$ $X,Y$ such that $A_j X^k = A_j Y^k$ for all~$1 \le j \le m$ and, by consequence, $A X^k = A Y^k$.
	\end{lemma}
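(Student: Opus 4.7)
The plan is to exhibit a single pair of witnesses $X \ne Y$ that works simultaneously for every cycle length $b_j = \cycle(A_j)$, by pooling the pairwise construction of \cite[Lemma 33]{article_arbre} across all subsets of $\mathcal{A} = \{b_1, \ldots, b_m\}$. Concretely, I would set
\[
X = \sum_{I \subseteq \mathcal{A}} \alpha_I \, C_{\lcm(I)} \qquad \text{and} \qquad Y = \sum_{I \subseteq \mathcal{A}} \beta_I \, C_{\lcm(I)},
\]
with the convention $C_{\lcm(\emptyset)} = C_1$. Because $\delta_I \prod_{a \in \mathcal{A} - I} a \le \prod_{a \in \mathcal{A}} a \le \alpha_I$, all $\beta_I$ are non-negative integers, so $X$ and $Y$ are bona fide FDDS. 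Since every $b \in \mathcal{A}$ is strictly greater than $1$, the unique subset with $\lcm(I) = 1$ is $\emptyset$, and thus the coefficient of $C_1$ in $X$ equals $\alpha_\emptyset$ while in $Y$ it equals $\beta_\emptyset = \alpha_\emptyset + \prod_{a \in \mathcal{A}} a$, which already shows $X \ne Y$.

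The next step would be to establish $C_b X = C_b Y$ for every $b \in \mathcal{A}$. I would partition the subsets of $\mathcal{A}$ into the pairs $(I, I \cup \{b\})$ with $I \subseteq \mathcal{A} - \{b\}$, and apply Lemma~\ref{lemme:cycleNonInjTech} in the case $k = 1$ to each such pair, obtaining
\[
C_b\bigl(\alpha_I C_{\lcm(I)} + \alpha_{I \cup \{b\}} C_{\lcm(I \cup \{b\})}\bigr) = C_b\bigl(\beta_I C_{\lcm(I)} + \beta_{I \cup \{b\}} C_{\lcm(I \cup \{b\})}\bigr).
\]
Summing over all $I \not\ni b$ telescopes into $C_b X = C_b Y$, since the two halves of each pair recombine into a sum over all subsets of $\mathcal{A}$.

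I would then promote this identity to arbitrary $k$ by induction on $k$. Assuming $C_b X^{k-1} = C_b Y^{k-1}$, commutativity of the semiring product gives
\[
C_b X^k = (C_b X) X^{k-1} = (C_b Y) X^{k-1} = Y \cdot (C_b X^{k-1}) = Y \cdot (C_b Y^{k-1}) = C_b Y^k.
\]
Because $A_j = C_{b_j}$ with $b_j \in \mathcal{A}$, this yields $A_j X^k = A_j Y^k$ for every $j$, and summing these equalities concludes with $A X^k = A Y^k$.

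The principal design choice is the pooled construction, tuned so that for every fixed $b \in \mathcal{A}$ the pairwise identities of \cite[Lemma 33]{article_arbre} telescope cleanly into $C_b X = C_b Y$. The main potential pitfall is confirming $X \ne Y$ once coefficients are collected by $\lcm$-value, because distinct subsets $I$ with the same $\lcm(I)$ might in principle cancel; this is handled unambiguously at the fixed-point coefficient, since $\emptyset$ is the unique subset of $\mathcal{A}$ whose $\lcm$ equals $1$. Once these pieces are in place, the promotion from $k = 1$ to general $k$ reduces to the short commutativity-based induction above.
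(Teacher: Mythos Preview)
Your proof is correct and uses the same witnesses $X$ and $Y$ as the paper, with the same verification that $X\ne Y$ via the fixed-point count. The one genuine difference is how you pass from $k=1$ to general $k$. The paper expands $C_b Y^k$ by the multinomial theorem and then invokes Lemma~\ref{lemme:cycleNonInjTech} (in its full strength, for each exponent $k_i$) to swap $\beta$'s for $\alpha$'s in each factor in turn. You instead establish $C_b X = C_b Y$ once (using only the $k=1$ case of Lemma~\ref{lemme:cycleNonInjTech}, which is exactly \cite[Lemma~33]{article_arbre}) and then run the short commutativity-based induction
\[
C_b X^k=(C_b X)X^{k-1}=(C_b Y)X^{k-1}=Y(C_b X^{k-1})=Y(C_b Y^{k-1})=C_b Y^k.
\]
This is more elementary: it avoids the multinomial expansion entirely and shows that Lemma~\ref{lemme:cycleNonInjTech} for general $k$ is not actually needed here. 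The paper's route, on the other hand, makes the role of the pair decomposition visible at every power simultaneously. Both arguments are valid; yours is the shorter one.
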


	\begin{proof}
                Let~$\mathcal{A}$ be the set of cycle lengths of~$A$.      
                Let $X = \sum_{I \subseteq \mathcal{A}} \alpha_I C_{\lcm(I)}$ and $Y = \sum_{I \subseteq \mathcal{A}} \beta_I C_{\lcm(I)}$. Remark that~$\alpha_I \ne \beta_I$ for all~$I \subseteq \mathcal{A}$, since~$\beta_I$ is $\alpha_I$ plus a nonzero term; in particular, $\alpha_\emptyset \ne \beta_\emptyset$ and, since these two integers are the number of fixed points of~$X$ and~$Y$ respectively, we have~$X \ne Y$.

                Let us consider a generic term $A_j$, which is a cycle $C_b$ for some $b>1$ for all~$j$. Let~$I_1, \ldots, I_n$ be an enumeration of the subsets of~$\mathcal{A}-\{b\}$ and let~$J_i = I_i \cup \{b\}$ for all~$1 \le i \le n$. Then~$Y = \sum_{i=1}^n (\beta_{I_i} C_{\lcm(I_i)} + \beta_{J_i} C_{\lcm(J_i)})$
		
		% We have that $C_b Y^k = C_b (\sum_{I \subseteq \mathcal{A} - \{b\}} \beta_I C_{\lcm(I)} + \beta_J C_{\lcm(J)})^k$.
		In order to compute~$A_j Y^k = C_b Y^k$ we can distribute the power over the sum, then the product over the sum, and we obtain:
		\begin{equation}\label{eq:distProdSum}
			C_b Y^k = \sum_{k_1+\cdots+k_n=k}\binom{k}{k_1,\ldots,k_n} C_b (\beta_{I_1} C_{\lcm(I_1)} + \beta_{J_1} C_{\lcm(J_1)})^{k_1}\prod_{i=2}^{n} (\beta_{I_i} C_{\lcm(I_i)} + \beta_{J_i} C_{\lcm(J_i)})^{k_i}.
		\end{equation}
		By Lemma \ref{lemme:cycleNonInjTech}, we have $C_b (\beta_{I_1} C_{\lcm(I_1)} + \beta_{J_1} C_{\lcm(J_1)})^{k_1} = C_b (\alpha_{I_1} C_{\lcm(I_1)} + \alpha_{J_1} C_{\lcm(J_1)})^{k_1}$.
		Thus, by replacing this in Equation \eqref{eq:distProdSum}, we have
		\[C_b Y^k = \sum_{k_1+\cdots+k_n=k}\binom{k}{k_1,\ldots,k_n} C_b (\alpha_{I_1} C_{\lcm(I_1)} + \alpha_{J_1} C_{\lcm(J_1)})^{k_1} \prod_{i=2}^{n} (\beta_{I_i} C_{\lcm(I_i)} + \beta_{J_i} C_{\lcm(J_i)})^{k_i}.\]
		If we repeat this substitution for all $2 \le i \le n$ and factor $C_b$ from each term of the sum we obtain
		\[C_b Y^k = C_b \sum_{k_1+\cdots+k_n=k}\binom{k}{k_1,\ldots,k_n}  \prod_{i=1}^{n} (\alpha_{I_i} C_{\lcm(I_i)} + \alpha_{J_i} C_{\lcm(J_i)})^{k_i}.\] 
		Since $\sum_{k_1+\cdots+k_n=k}\binom{k}{k_1,\ldots,k_n}  \prod_{i=1}^{n} (\alpha_{I_i} C_{\lcm(I_i)} + \alpha_{J_i} C_{\lcm(J_i)})^{k_i}$ is just $X^k$, we conclude that $C_b Y^k = C_b X^k$, \ie, $A_j Y^k = A_j X^k$.

                This holds separately for each connected component~$A_j$ of~$A$; by adding all terms and factoring $X^k$ and $Y^k$ we obtain $AX^k = AY^k$.
	\end{proof}
	
	We remark that the $X$ and $Y$ which we have built are just permutations. 
	An interesting property of permutations is that they are closed under sums and products (and thus nonnegative integer powers).
	Thus $X^k$ and $Y^k$ are also permutations.
	Another useful property of permutations is, intuitively, that if we multiply a sum of connected components $A = A_1 + \cdots + A_n$ with a sum of cycles $X = X_1 + \cdots + X_m$, the trees rooted in each connected component of $A_i X_j$ are just a periodic repetition of the trees rooted in the cycle of $A_i$ for all $i,j$ \cite[Corollary 14]{article_arbre}.
	Thanks to these properties, we can characterize the injectivity of monomials.
	
	\begin{proposition}\label{prop:charaInjMonome}
		A monomial over FDDS is injective if and only if its coefficient is cancelable.
	\end{proposition}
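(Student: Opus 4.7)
The plan is to prove the two directions of the biconditional separately. For the \emph{if} direction, I would simply invoke Proposition~\ref{prop:condSufInj}: if the coefficient~$A$ of the monomial $AX^k$ (with $k \ge 1$) is cancelable, then $A X^k$ has a non-constant cancelable coefficient, hence is injective.

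For the \emph{only if} direction I would argue by contrapositive, assuming $A$ is not cancelable and exhibiting two distinct FDDS $X \ne Y$ with $AX^k = AY^k$. Using the characterization of cancelable FDDS from~\cite{article_arbre}, non-cancelability of $A = A_1 + \cdots + A_m$ means that every connected component $A_j$ has cycle length $b_j > 1$. I would then apply Lemma~\ref{lemme:noInjMonomeP1} \emph{not} to $A$ itself but to its ``cycle skeleton'' $A' = A'_1 + \cdots + A'_m$, where $A'_j = C_{b_j}$ is the cycle underlying $A_j$ (with the transient trees discarded). The lemma produces distinct permutations $X \ne Y$ such that $A'_j X^k = A'_j Y^k$ for every $j$; since permutations are closed under products, $X^k$ and $Y^k$ are themselves permutations.

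The core step is then to lift each equality $A'_j X^k = A'_j Y^k$ to $A_j X^k = A_j Y^k$, which sums to $A X^k = A Y^k$ and completes the proof. For this I would invoke the structural fact from~\cite[Corollary~14]{article_arbre}, recalled in the excerpt, that for a connected $A_j$ with cycle length $b_j$ and a single cycle $C_q$, the product $A_j C_q$ consists of $\gcd(b_j, q)$ isomorphic connected components, each a cycle of length $L = \lcm(b_j, q)$ decorated by a periodic repetition of $A_j$'s transient trees. Denoting this common component by $B_j^L$, it depends only on $A_j$ and $L$ (the repetition period being $L/b_j$). Extending by linearity to an arbitrary permutation $Z$, both $A_j Z$ and $A'_j Z$ decompose as $\sum_L n_L(Z)\, B_j^L$ and $\sum_L n_L(Z)\, C_L$ respectively, with the \emph{same} coefficients $n_L(Z)$. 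The bijection $C_L \leftrightarrow B_j^L$ then transports $A'_j X^k = A'_j Y^k$ into $A_j X^k = A_j Y^k$.

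The main obstacle I anticipate is setting up the identification $A_j Z \leftrightarrow A'_j Z$ cleanly: this requires carefully tracing how the transient trees of $A_j$ propagate through the direct product with a cycle, checking that the multiplicities $\gcd(b_j, q)$ arise identically on both sides, and verifying that $B_j^L$ is indeed well-defined (up to isomorphism, independently of a choice of starting position on the cycle). Once this correspondence is established, the reduction from the general non-cancelable case to the pure-cycle case already covered by Lemma~\ref{lemme:noInjMonomeP1} is immediate.
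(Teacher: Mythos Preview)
Your proposal is correct and follows essentially the same route as the paper: the paper also passes to the ``cycle skeleton'' of~$A$ (which it calls~$B$), applies Lemma~\ref{lemme:noInjMonomeP1} componentwise to obtain distinct permutations~$X,Y$ with~$B_i X^k = B_i Y^k$, and then invokes \cite[Corollary~14]{article_arbre} to lift these equalities to~$A_i X^k = A_i Y^k$. Your elaboration of the lifting step via the bijection~$C_L \leftrightarrow B_j^L$ is more explicit than the paper's one-line citation, but the underlying argument is identical.
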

	
	\begin{proof}
		Assume that the monomial $AX^k$, where $A = A_1 + \cdots + A_n$ as a sum of connected components and~$k>0$, is not cancelable. Hence, by \cite[Theorem 34]{article_arbre}, $A$ does not have a cycle of length 1 (a fixed point).
		Let $B = B_1 + \cdots + B_n$ the permutation such that~$\cycle(B_i) = \cycle(A_i)$ for all~$i$.
		By Lemma \ref{lemme:noInjMonomeP1}, there exist two distinct permutations $X,Y$ such that $B_i X^k = B_i Y^k$ for all $i$. 
		Thus, by~\cite[Corollary 14]{article_arbre}, we have $A_i X^k = A_i Y^k$ and, by summing, $A X^k = A Y^k$.

		The inverse implication was already proved in Proposition \ref{prop:condSufInj}.
        \end{proof}
	
	In the rest of this section, we will prove that a polynomial is injective if and only if a certain monomial is injective.
	Remark that the sum of any FDDS with a cancelable FDDS is also cancelable.
	Hence, the sum of the coefficients of a polynomial is cancelable if and only if at least one of them is cancelable.
	
	From the characterization of injective monomials (Proposition~\ref{prop:charaInjMonome}) we can deduce that there exists an positive integer $k$ such that $A X^k$ is injective if and only if $A X^n$ is injective for \emph{all} positive integer $n$.
	
	\begin{proposition}\label{prop:charaInj}
		Let $P = \sum_{i=1}^{m} A_i X^{i}$ be a polynomial without constant term and consider the monomial~$M = (\sum_{i=1}^{m} A_i)X$. Then, $P$ is injective if and only if $M$ is injective.
	\end{proposition}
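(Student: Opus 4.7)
The plan is to prove both directions of the equivalence through the characterization of monomial injectivity (Proposition~\ref{prop:charaInjMonome}), combined with the observation stated just before the proposition that the sum $S = \sum_{i=1}^m A_i$ is cancelable if and only if some~$A_i$ is cancelable.

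For the implication ``$M$ injective $\Rightarrow$ $P$ injective'', Proposition~\ref{prop:charaInjMonome} tells us that $S$ is cancelable, hence some coefficient~$A_i$ is cancelable, and then Proposition~\ref{prop:condSufInj} immediately yields the injectivity of~$P$.

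For the converse, I would proceed by contrapositive. Assuming $M$ is not injective, Proposition~\ref{prop:charaInjMonome} gives that $S$ is not cancelable, so none of the $A_i$ is cancelable; in particular, no connected component of any $A_i$ has a cycle of length~$1$. Let $\mathcal{A}$ be the set of cycle lengths appearing among the connected components of $A_1, \ldots, A_m$, so that all elements of $\mathcal{A}$ are strictly greater than~$1$. I would then reuse the explicit permutations $X = \sum_{I \subseteq \mathcal{A}} \alpha_I C_{\lcm(I)}$ and $Y = \sum_{I \subseteq \mathcal{A}} \beta_I C_{\lcm(I)}$ constructed in the proof of Lemma~\ref{lemme:noInjMonomeP1}; these are distinct, since for instance $\alpha_\emptyset \ne \beta_\emptyset$ and these coefficients count the fixed points of~$X$ and~$Y$ respectively.

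The technical heart of the argument is the observation that this construction of $X$ and $Y$ does not depend on the exponent~$k$ used in Lemma~\ref{lemme:noInjMonomeP1}: the iterative substitution of Lemma~\ref{lemme:cycleNonInjTech} yields $C_b X^k = C_b Y^k$ for every $b \in \mathcal{A}$ and every $k \ge 1$ using the \emph{same} pair $(X,Y)$. Applying \cite[Corollary~14]{article_arbre} term by term to each connected component of $A_i$ (whose transient trees are arranged periodically around its cycle), we lift these equalities to $A_i X^i = A_i Y^i$ for every~$i$. Summing over~$i$ yields $P(X) = P(Y)$ with $X \ne Y$, contradicting the injectivity of~$P$. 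I expect the uniformity-in-$k$ observation to be the only delicate step; the rest is bookkeeping on results already available in the paper.
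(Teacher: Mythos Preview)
Your proposal is correct and follows essentially the same route as the paper. Both directions match: the forward implication via Proposition~\ref{prop:charaInjMonome} and Proposition~\ref{prop:condSufInj}, and the converse via the explicit permutations~$X,Y$ built from the set~$\mathcal{A}$ of cycle lengths of~$\sum_i A_i$, together with the uniformity-in-$k$ of Lemma~\ref{lemme:cycleNonInjTech} (which the paper also invokes, in the remark preceding the proposition) to obtain $A_i X^i = A_i Y^i$ for every~$i$ and then sum.
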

	
	\begin{proof}
		Assume that $M$ is injective.
		Thus, by the Proposition \ref{prop:charaInjMonome}, its coefficient is cancelable.
		Hence, at least one coefficient of $P$ is cancelable.
		By Proposition \ref{prop:condSufInj}, we conclude that $P$ is injective. 
		
		For the other direction, assume that $M$ is not injective.
		Let $X$ and $Y$ the FDDS constructed in the proof of Proposition \ref{prop:charaInjMonome}.
		As previously, $X \neq Y$ but $A_i X^i = A_i Y^i$ for all $i \in \{1,\ldots, m\}$. 
		So, by sum, we have $P(X) = P(Y)$.
	\end{proof}
	
	Thanks to this characterization, we conclude that the condition given in Proposition \ref{prop:condSufInj} is also necessary for polynomials without a constant term.
	All it remains to prove is that this is also necessary in the presence of a constant term.
	
	\begin{theorem}\label{th:charaInjPoly}
		Let $P = \sum_{i=0}^{m} A_i X^{i}$ a polynomial. Then $P$ is injective if and only if at least one of $A_i$ is cancelable for some~$i \ge 1$.
	\end{theorem}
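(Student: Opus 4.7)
The plan is to reduce the statement to the already-proven constant-free case, Proposition~\ref{prop:charaInj}, by observing that adding a constant term to a polynomial does not affect injectivity. Both directions are essentially corollaries of the work done so far.

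For the forward (sufficient) direction, I would simply invoke Proposition~\ref{prop:condSufInj}: if some coefficient $A_i$ with $i \ge 1$ is cancelable, then $P$ is injective. No further argument is needed here.

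For the converse (necessary) direction, I would argue by contrapositive. Assume that no $A_i$ with $i \ge 1$ is cancelable. Let $P' = \sum_{i=1}^{m} A_i X^i = P - A_0$, and consider the associated monomial $M = (\sum_{i=1}^{m} A_i)X$. Since the sum of non-cancelable FDDS is non-cancelable (equivalently, a sum of FDDS is cancelable iff at least one summand is cancelable, as noted just before Proposition~\ref{prop:charaInj}), the coefficient $\sum_{i=1}^m A_i$ of $M$ is not cancelable. By Proposition~\ref{prop:charaInjMonome}, $M$ is not injective, and then by Proposition~\ref{prop:charaInj}, $P'$ is not injective. So there exist distinct FDDS $X \neq Y$ with $P'(X) = P'(Y)$.

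It remains to pass from $P'$ to $P$: adding $A_0$ to both sides of $P'(X) = P'(Y)$ yields $P(X) = P'(X) + A_0 = P'(Y) + A_0 = P(Y)$, which witnesses that $P$ is not injective either. This closes the contrapositive and completes the proof. The only ``obstacle'' worth mentioning is conceptual rather than technical: one must observe that the existence of distinct $X, Y$ with $P'(X) = P'(Y)$ from Proposition~\ref{prop:charaInj} transfers verbatim to $P$ because addition by the constant $A_0$ is cancellative on the semiring level (being a semiring operation), so injectivity of $P$ and of $P'$ coincide.
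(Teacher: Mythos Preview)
Your proof is correct and follows essentially the same route as the paper: invoke Proposition~\ref{prop:condSufInj} for the sufficient direction, and for the necessary direction strip the constant term, apply Proposition~\ref{prop:charaInj} to obtain distinct $X,Y$ with $P'(X)=P'(Y)$, then add back $A_0$. The only difference is that you detour through $M$ and Proposition~\ref{prop:charaInjMonome} before invoking Proposition~\ref{prop:charaInj}, which is redundant since that step is already contained in the proof of Proposition~\ref{prop:charaInj}; the paper simply applies Proposition~\ref{prop:charaInj} directly.
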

	
	\begin{proof}
                Assume that no non-constant coefficient of $P$ is cancelable.
		Let $P' = \sum_{i=1}^{m} A_i X^{i}$ the polynomial $P$ without the term $A_0$.
		By the Proposition \ref{prop:charaInj}, there exist two different FDDS $X,Y$ such that $P'(X) = P'(Y)$.
		Then $P'(X) + A_0 = P'(Y) + A_0$, which implies $P(X) = P(Y)$. 
		Thus, $P$ is not injective.

		The inverse implication is due to Proposition \ref{prop:condSufInj}.
	\end{proof}
	%%% Local Variables:
	%%% mode: LaTeX
	%%% TeX-master: "main"
	%%% End:

	\section{Conclusion}

In this article we proved that all univariate polynomials over unrolls are injective and we characterized the injectivity of univariate polynomials over FDDS. 
Furthermore, thanks to the proof of Lemma \ref{lemme:structurePolyFDDS}, we can use a technique similar to \cite{kroot} to compute a solution (if it exists) to injective polynomial equations over unrolls and FDDS. 
This result suggests interesting directions for future work,
namely, characterizing the injectivity of multivariate polynomials, and finding efficient algorithms for other classes of equations involving non-injective polynomials (or the reason why they do not exist, if this is the case).

Furthermore, in the literature, there exists a procedure for solving and enumerating the solutions of an equation of the form $P(X_1, \ldots, X_n) = B$ where polynomial $P$ is a sum of univariate monomials \cite{dennunzio2023dds_journal}.
Thus, we can hopefully find an efficient algorithm for the special case where $P$ is a sum of injective univariate monomials. Since polynomial equations with an unbounded number of variables are $\NP$-hard~\cite{NPhard}, one could try to improve these results for equations with a bounded number of variables, as well as finding exactly how many variables are necessary for the problem to become $\NP$-hard.

%%% Local Variables:
%%% mode: LaTeX
%%% TeX-master: "main"
%%% End:

	\begin{credits}
        \subsubsection{\ackname}
        This work has been partially supported by the HORIZON-MSCA-2022-SE-01 project 101131549 ``Application-driven Challenges for Automata Networks and Complex Systems (ACANCOS)'' and by the project ANR-24-CE48-7504 ``ALARICE''.
	\end{credits}

	\bibliography{biblio}
	\bibliographystyle{unsrt}
	
\end{document}